\documentclass{article}
\usepackage[utf8]{inputenc}
\usepackage{amsmath, amssymb, amsthm, graphicx, hyperref, comment}
\usepackage{mathtools}
\usepackage{newunicodechar}
\usepackage{todonotes}
\usepackage{subcaption}
\newunicodechar{π}{\pi}

\newtheorem{lemma}{Lemma}

\newtheorem{theorem}{Theorem}
\newtheorem{definition}{Definition}

\newtheorem{proposition}{Proposition}

\title{Tighter Bounds for Algorithmic Complexity Estimation Using a Reusable Code-Based Block Decomposition Method}

\author{
{\small Eduardo Yuji Sakabe\textsuperscript{1,2},
Felipe S. Abrahão\textsuperscript{1,3,4,5},
Santiago Hernández-Orozco\textsuperscript{3}},\\
{\small Ricardo Gudwin\textsuperscript{2},
Hector Zenil\textsuperscript{1,3,6,7}}
}

\date{}

\begin{document}

\maketitle

\vspace{-2.0em}

\begin{center}
\scriptsize

\textsuperscript{1}Algorithmic Dynamics Lab, Karolinska Institute \& King's College London, London, UK \\

\textsuperscript{2}School of Electrical and Computer Engineering, University of Campinas (UNICAMP), Campinas, Brazil \\

\textsuperscript{3}Oxford Immune Algorithmics, Oxford University Innovation \& London Institute for Healthcare Engineering, UK \\

\textsuperscript{4}Centre for Logic, Epistemology and the History of Science, University of Campinas (UNICAMP), Campinas, Brazil \\

\textsuperscript{5}DEXL, National Laboratory for Scientific Computing (LNCC), Petrópolis, Brazil \\

\textsuperscript{6}Departments of Biomedical Computing and Digital Twins, School of Biomedical Engineering and Medical Sciences, King's Faculty of Life Sciences and Medicine, King's College London, UK \\

\textsuperscript{7}King's Institute for Artificial Intelligence, King's College London, UK
\end{center}

\vspace{1.0em}

\begin{abstract}
The Block Decomposition Method (BDM) was introduced as an alternative to popular lossless compression methods such as LZW for estimating algorithmic complexity from the principles of algorithmic probability and classical information theory. It extends the Coding Theorem Method (CTM) from small objects to larger ones by combining local estimates of algorithmic complexity with a global account of repetition based on Shannon entropy. Here, we introduce a version of BDM in which dependencies between blocks are utilized to reduce the length of the description based on reusable program code in the decomposition of an object, and on conditional descriptions capable of accounting for shared structure between observations. We formalize this allocation of descriptive resources as \emph{algorithmic attention}. Repeated or related components need not be described independently, and the resulting reduction in description length is governed by the amount of shared algorithmic information. We formulate this extension as a reuse optimization problem, show that exact optimization is NP-hard, derive conditions under which it improves upon independent descriptions, relate the achievable gains to algorithmic mutual information, prove the relationship with the previous BDM version, and provide a roadmap for its implementation using CTM-derived complexity and conditional complexity estimates.\\

\noindent \textbf{Keywords:} Coding Theorem Method (CTM), Block Decomposition Method (BDM), Algorithmic Information Dynamics, Shannon Entropy, LZW compression.
\end{abstract}

\section{Introduction}

Algorithmic information theory (AIT) provides a formal account of structure in terms of description length: an object is simple when it admits a short description, and random when no such description exists. This perspective is especially relevant when regularities arise from computable mechanisms rather than from statistical repetition alone. However, Kolmogorov complexity is uncomputable. Practical estimators based on algorithmic probability, such as the Coding Theorem Method (CTM)~\cite{delahaye_numerical_2012}, can capture algorithmic regularities not reflected by Shannon entropy or lossless statistical compressors such as LZW, but they are restricted to small objects. The Block Decomposition Method (BDM) was introduced to scale CTM-based estimation by decomposing larger objects into small blocks, assigning local algorithmic-complexity estimates, and accounting for repeated blocks through logarithmic multiplicity terms~\cite{zenil_decomposition_2018,zenil_review_2020}.

BDM has since been applied across artificial intelligence, machine learning, causal analysis, network science, and human behaviour~\cite{hernandez-espinosa_superarc_2026,sakabe_binarized_2026,bakhtiarifard_characterizing_2026,hernandez-orozco_algorithmic_2021,zenil_causal_2019,zenil_minimal_2025,gauvrit_human_2017}. Yet its original formulation retains a structural limitation. Its local CTM-based estimates capture algorithmic structure, while aggregation across blocks depends only on multiplicity counts. BDM therefore accounts for exact repetition but assigns independent descriptions to distinct blocks, even when they share algorithmic information through short transformations, common generators, or reusable subprograms.

We introduce BDM\,2.0 to incorporate such dependencies into decomposition-based estimation. Rather than assigning independent descriptions to all distinct blocks, BDM\,2.0 allows blocks, or their generating programs, to be described conditionally from one another. We formalize this allocation of descriptive resources as \emph{algorithmic attention}, defined as the selection of explanatory components from which the largest reduction in total description length can be obtained. BDM is recovered as the independent-description limit, up to representation overhead. When reusable structure is present, BDM\,2.0 replaces independent descriptions with conditional ones and improves whenever the estimated reuse gain exceeds the relevant representation and validation overheads. The source of this gain is algorithmic mutual information between blocks, programs, or both.

BDM\,2.0 extends the class of structure that BDM can make empirically accessible. In many systems, regularity is not exhausted by the complexity or exact repetition of individual components. It also lies in short mechanisms that generate or transform one component from another. BDM\,2.0 captures these dependencies through conditional descriptions whenever they are shorter than independent descriptions. Shared generators can therefore reduce the estimate when BDM\,1.0 overpays by treating distinct blocks as independent. This is the regime in which AIT is most informative: when explanation is tied to causal structure and universal compression rather than to statistical fit alone.

\section{Background}

\subsection{Kolmogorov Complexity and Algorithmic Probability}

We briefly recall the basic notions of algorithmic information theory \cite{kolmogorov_three_1968, chaitin_length_1969, solomonoff_formal1_1964, solomonoff_formal2_1964, li_introduction_2019}.

Let \(U\) be a fixed universal prefix-free Turing machine. The \emph{Kolmogorov complexity}, also known as algorithmic or Kolmogorov--Chaitin complexity, of a string \(x\) is
\[
K(x) = \min\{|p| : U(p)=x\}.
\]
An object \(x\) is \emph{algorithmically random}, or incompressible up to a constant, if
\[
K(x) \ge |x| - c
\]
for some fixed constant \(c\). In this case, \(x\) has no substantially shorter description than itself. The constant \(c\) reflects the choice of the reference universal machine and, by the invariance theorem, does not depend on \(|x|\).

The \emph{conditional Kolmogorov complexity} of \(x\) given \(y\) is
\[
K(x \mid y) = \min\{\,|p| : U(p,y) = x \,\},
\]
representing the minimal information required to generate \(x\) when \(y\) is provided as auxiliary input.

The \emph{algorithmic probability} of \(x\), also called the universal a priori semimeasure, is
\[
m(x) = \sum_{p : U(p) = x} 2^{-|p|}.
\]
It is the probability that a universal prefix-free machine outputs \(x\) when fed with a random program whose bits are chosen independently with probability \(1/2\).

\emph{Levin's coding theorem} \cite{levin_laws_1974} establishes a fundamental connection between algorithmic probability and complexity:
\[
K(x) = -\log m(x) + O(1),
\]
where logarithms are taken in base two. Thus, objects with high algorithmic probability have low Kolmogorov complexity, and conversely.

The \emph{algorithmic mutual information} between \(x\) and \(y\) is
\[
I(x:y)
=
K(x)+K(y)-K(x,y)
=
K(x)-K(x\mid y^*)+O(1),
\]
where \(y^*\) is a shortest description of \(y\). It quantifies their shared algorithmic structure, or equivalently the extent to which knowing one object reduces the description length of the other.

\subsection{The Coding Theorem Method (CTM)}

The \emph{Coding Theorem Method} (CTM) \cite{delahaye_numerical_2012, soler-toscano_calculating_2014} approximates \(K(x)\) by empirically estimating \(m(x)\) from the output distribution of small Turing machines. For \(n\)-state, two-symbol Turing machines, the cardinality of the corresponding machine space is
\[
N(n)=(4n+2)^{2n},
\]
as detailed in Appendix~\ref{app:ctm_counts}. Existing CTM tables have been obtained by exhaustive or reduced enumeration of small machine spaces, most notably the 5-state, 2-symbol case. Larger spaces, such as \(n=6\), require partial enumeration or sampling.

CTM is grounded in algorithmic probability and can be instantiated across different computational formalisms. We use Turing machines for illustration and adopt a binary alphabet, since any finite alphabet can be encoded in binary. CTM therefore provides an algorithmic prior over small strings and discrete objects.

Formally, CTM approximates Kolmogorov complexity by leveraging the relation between algorithmic probability and complexity established by Levin's coding theorem \cite{levin_laws_1974, levin_concrete_1977}. Let \(D_n(s)\) denote the empirical frequency of a string \(s\) among the outputs of all halting \(n\)-state, 2-symbol Turing machines in the chosen enumeration. Then, by the coding theorem, the complexity of \(s\) can be approximated as
\[
\mathrm{CTM}_n(s) \approx -\log_2 D_n(s).
\]
In this way, CTM estimates \(D_n\) by enumeration of small machines and derives an approximation to \(K(s)\) from the observed output distribution.

\subsection{Conditional CTM}

Standard CTM assumes that Turing machines begin from a blank input tape. In this work, we extend this framework to estimate the \emph{conditional} algorithmic complexity \(K(y \mid x)\), defined as the length of the shortest program that, when given input \(x\), produces output \(y\):
\[
K(y \mid x) = \min_{p} \{ |p| : U(p, x) = y \}.
\]
To approximate this quantity using CTM, we modify the enumeration so that machines are executed on a fixed input tape configuration \(x\), and we count the programs that halt with output \(y\). This induces the conditional algorithmic probability
\[
m(y \mid x) = \sum_{p : U(p, x) = y} 2^{-|p|}.
\]

As in standard CTM, we estimate this probability empirically from the output distribution of the enumerated machines. Let \(D_n(y \mid x)\) denote the empirical frequency with which \(y\) is produced by halting \(n\)-state machines initialized with input \(x\). The conditional CTM estimate is then
\[
\mathrm{CTM}_n(y \mid x) \approx -\log_2 D_n(y \mid x).
\]

Precomputing conditional CTM values for all input--output pairs would require estimating an empirical output distribution for every possible input tape configuration. This is prohibitively expensive beyond very small domains, since the relevant machine space must be explored relative to each input. In practice, conditional CTM can be estimated by partial enumeration of the Turing-machine space for a fixed input \(x\). A complementary approach is to exploit execution traces from enumerated machines, using observed transitions between intermediate configurations as empirical evidence for conditional descriptions. This provides a local, context-dependent estimate of the algorithmic information required to transform \(x\) into \(y\).

\subsection{The Block Decomposition Method (BDM)}

For larger objects \(X\), CTM computation is infeasible.
The \emph{Block Decomposition Method} (BDM) decomposes \(X\) into small blocks and groups them by distinct block types \(x_i\), each lying within the domain of a computable complexity estimator:
\[
\mathrm{BDM1}(X) = \sum_{i} \left( \hat K(x_i) + \log m_i \right),
\]
where $m_i$ is the multiplicity of block $x_i$.

Notice that $\hat K$ can be instantiated by different approximations of Kolmogorov complexity and not necessarily by CTM. Thus, BDM is agnostic to the actual method and is rather a means to combine different approaches. In standard BDM, this estimator is instantiated by CTM:
\[
\hat K(x_i) = \mathrm{CTM}(x_i).
\]
Yet, being possibly the only alternative to Shannon-entropy-based methods, including popular lossless statistical compression methods such as LZW~\cite{ziv_universal_1977, welch_technique_1984} and cognates (ZIP, PNG, etc.), CTM is grounded in algorithmic probability and can capture algorithmic regularities beyond statistical repetition~\cite{zenil_review_2020}.

The \(\log m_i\) term prevents repeated occurrences of the same block from being charged as independent descriptions by accounting instead for their multiplicity. For a fixed block size and a fixed table of \(\hat K\)-values, the distinct-block complexity contribution remains bounded, and for increasingly large objects the discrepancy between BDM and the corresponding Shannon block entropy becomes negligible relative to the total number of blocks~\cite{zenil_decomposition_2018}.

This ability to estimate the algorithmic complexity of large empirical objects has supported applications across several domains. In artificial intelligence and machine learning, BDM has been used in SuperARC as a human-agnostic benchmark of model abstraction and prediction under increasing algorithmic complexity~\cite{hernandez-espinosa_superarc_2026}; in binarized neural networks to show that algorithmic complexity tracks training dynamics more closely than entropy~\cite{sakabe_binarized_2026}; and in full-precision deep neural networks to characterize generalization, overfitting, and grokking dynamics while informing pruning strategies~\cite{bakhtiarifard_characterizing_2026}. BDM and related algorithmic-probability methods have also supported learning over non-differentiable spaces~\cite{hernandez-orozco_algorithmic_2021}, causal deconvolution into likely generative mechanisms~\cite{zenil_causal_2019}, minimal-information-loss dimensionality reduction and network sparsification~\cite{zenil_minimal_2025}, and studies of behavioural complexity across human development~\cite{gauvrit_human_2017}.

Hereafter, we refer to this original formulation as BDM\,1.0.

\section[BDM 2.0: Algorithmic Attention and Reuse]{BDM\,2.0: Algorithmic Attention and Reuse}

BDM\,2.0 extends the original Block Decomposition Method (BDM) by recognizing that distinct blocks need not correspond to independent descriptions. When two blocks, or their generating programs, share algorithmic information, part of their description can be reused rather than paid for repeatedly. BDM\,2.0 therefore seeks the shortest explanatory representation available under the chosen estimators by exploiting algorithmic mutual information in either observation space (shared outputs) or program space (shared generators). This extension measures not merely similarities between observations, but the shared \emph{causal generators} responsible for their structural regularities.

BDM\,2.0 assumes a decomposition \(B(X)=\{x_1,\ldots,x_n\}\) of an object \(X\) into distinct blocks with multiplicities \(m_1,\ldots,m_n\), together with a finite, nonempty set of candidate programs \(\mathcal{P}(x_i)\) capable of generating each block \(x_i\). Each candidate set is assumed to contain at least one shortest or near-shortest descriptor \(p_i^*\) of \(x_i\), so that \(p_i^*\in\mathcal{P}(x_i)\) and \(U(p_i^*)=x_i\). The method further assumes access to computable estimators \(\hat K(\cdot)\) and \(\hat K(\cdot\mid\cdot)\) of Kolmogorov and conditional Kolmogorov complexity, respectively, such as those obtained through the Coding Theorem Method (CTM).

Let
\[
\mathcal{E}_X
=
\left\{
P=\{p_1,\ldots,p_n\}
:
p_i\in\mathcal{P}(x_i)
\right\}
\]
denote the family of explanatory program sets containing one candidate program for each distinct block of \(X\). Since \(U\) is deterministic and the blocks are distinct, the selected programs are also distinct, so \(P\) is well defined as a set. Enlarging any \(\mathcal{P}(x_i)\) enlarges \(\mathcal{E}_X\) and expands the admissible choices available to the reuse optimization.

The method selects an explanatory set \(P\in\mathcal{E}_X\) together with a nonempty subset \(S\subseteq P\) of explanatory programs from which the remaining programs can be most compactly described. Each program in \(S\) is paid for once through its estimated complexity, while every remaining program \(p_j\in P\setminus S\) is described relative to a program \(p_i\in S\). The cost of explaining \(p_j\) is then given by the smallest available reuse cost, either in program space through \(\hat K(p_j\mid p_i)\) or in observation space through \(\hat K(x_j\mid x_i)\).

The resulting \emph{explanation cost} is therefore defined as
\begin{equation}
\label{eq:explanation-cost}
\begin{aligned}
\mathcal{L}_{\mathrm{expl}}(X)
=
\min_{P\in\mathcal{E}_X}
\min_{\varnothing\neq S\subseteq P}
\Bigg(
&\sum_{p_i\in S} \hat K(p_i) \\
&+
\sum_{p_j\in P\setminus S}
\min_{p_i\in S}
\min\left\{
\hat K(p_j\mid p_i),
\hat K(x_j\mid x_i)
\right\}
\Bigg).
\end{aligned}
\end{equation}
The first term describes the representative programs in \(S\), whose execution produces the corresponding blocks \(x_i\). The second term uses these representatives and their outputs to reconstruct each remaining target block \(x_j\). The program-level conditional \(\hat K(p_j\mid p_i)\) recovers \(p_j\) from \(p_i\), after which executing \(p_j\) produces \(x_j\). Alternatively, since executing \(p_i\) already provides \(x_i\), the observation-level conditional \(\hat K(x_j\mid x_i)\) recovers \(x_j\) directly, without reconstructing \(p_j\). Equation~\eqref{eq:explanation-cost} therefore provides a structured estimate of the joint complexity of the blocks by combining independently described representatives with conditional descriptions of the remaining components, up to representation, decoding, and indexing overheads.

Following the general definition of attention as a system that allocates structural or temporal resources to perform a task~\cite{santana_attention_2022,santana_neural_2021}, we introduce this principle of selective allocation within the framework of Algorithmic Information Dynamics (AID)~\cite{zenil_algorithmic_2023} as \emph{algorithmic attention}: the allocation of descriptive resources to the subset \(S\) of explanations that yields the largest reduction in total description length. Under this view, the relevant components of an object are not necessarily those that are locally most complex, but those whose description enables other observations or programs to be recovered at lower conditional cost. Descriptive resources are therefore not assigned uniformly across all observations, but preferentially to the explanatory components from which other observations or programs can be derived.

The analogy is functional. As in neural attention~\cite{santana_neural_2021}, including self-attention in Transformer models~\cite{vaswani_attention_2017}, some components are prioritized because they provide useful context for processing others. In algorithmic attention, this context is provided by shared generators or conditional descriptions, allowing related observations or programs to be described at lower description cost.

Figure~\ref{fig:bdm2-concept} schematically summarizes this shift from independent block descriptions in BDM\,1.0 to algorithmic attention and reuse in BDM\,2.0.

\begin{figure}[!t]
\centering
\includegraphics[width=\linewidth]{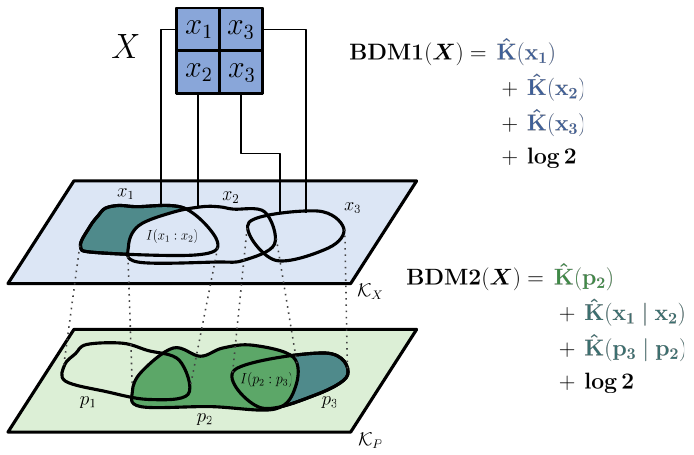}
\caption{\textbf{BDM\,1.0 independent descriptions versus BDM\,2.0 algorithmic attention and reuse.}
BDM\,1.0 operates at the level of distinct blocks, treating their descriptions as independent and summing the estimated complexities of each unique block. BDM\,2.0 extends this view by allowing dependencies to be represented in both program-complexity space (\(\mathcal{K}_P\)) and object-complexity space (\(\mathcal{K}_X\)). The quantities \(I(p_2:p_3)\) and \(I(x_1:x_2)\) illustrate shared algorithmic information between programs and observations, respectively. BDM\,2.0 exploits such mutual information for reuse, through algorithmic attention, by selecting explanatory programs and using conditional descriptions whenever they reduce the total description length, thereby seeking a minimum-cost explanation of the object. In both methods, repeated blocks are accounted for through the logarithmic multiplicity term \(\log m_i\). The figure is schematic and intended only to convey intuition.}
\label{fig:bdm2-concept}
\end{figure}

Although \(\mathcal{P}(x_i)\) contains shortest or near-shortest generators as an admissible baseline, it is not restricted to them. A non-minimal program \(u\in\mathcal{P}(x_i)\), with \(U(u)=x_i\), can still be selected into \(S\) if it provides short conditional descriptions of many other programs or observations. More precisely, for a target set \(T\subseteq\{1,\ldots,n\}\setminus\{i\}\), such a program becomes advantageous over independent descriptions whenever
\[
\hat K(u)
+
\sum_{j\in T}
\min\left\{
\hat K(p_j\mid u),
\hat K(x_j\mid x_i)
\right\}
<
\hat K(p_i)
+
\sum_{j\in T}\hat K(p_j),
\]
up to fixed representation and decoding overheads. Thus, the optimization does not privilege minimality alone, but the trade-off between the cost of describing a representative program and the savings obtained by reusing it.

This reuse trade-off motivates the conjecture that highly reprogrammable programs yield larger reductions in description length through reuse in BDM\,2.0. Reprogrammability characterizes a program's capacity to emulate a range of other programs through coarse-grained transformations~\cite{zenil_asymptotic_2017}. Programs with high reprogrammability, such as Busy Beaver machines studied as candidates for high emulation capacity, may have larger unconditional description length while carrying substantial algorithmic mutual information with many other programs. Such programs are natural candidates for \(S\). They need not be the shortest explanation of any single block, but may serve as algorithmic hubs from which multiple explanations can be recovered at low conditional cost.

One might ask why program-level reuse is needed rather than relying only on conditional descriptions between observations. At the limit of true Kolmogorov complexity, output-level conditionals can in principle capture any computable dependency between \(x_i\) and \(x_j\). BDM\,2.0, however, relies on finite computable estimators \(\hat K\), whose conditional estimates depend on the chosen representation, model class, and search budget. Thus, a short transformation from \(p_i\) to \(p_j\) need not be detected as a short estimated description from \(x_i\) to \(x_j\). As illustrated in Figure~\ref{fig:observation-program-conditionals}, simple transformations may be directly detectable between observations, whereas closely related programs may generate outputs whose dependency is difficult to recover in observation space. We therefore retain conditional descriptions in both observation and program space.

\begin{figure}[!t]
\centering

\begin{subfigure}[t]{\textwidth}
    \centering
    \includegraphics[width=0.88\linewidth]{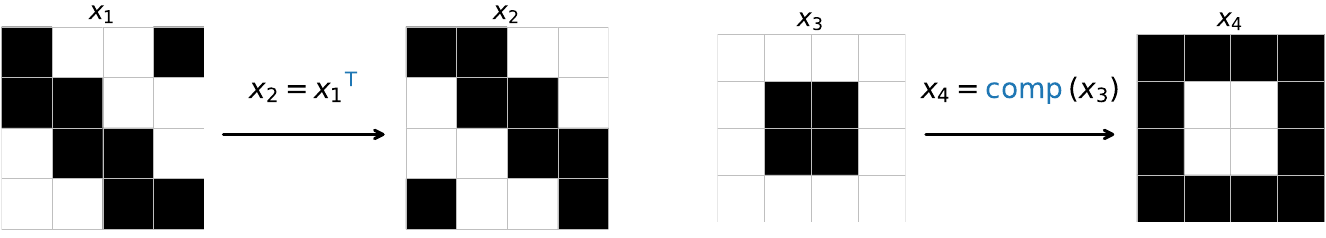}
    \caption{Fixed transformations between observations.}
    \label{fig:observation-space-conditionals}
\end{subfigure}

\vspace{0.4em}

\begin{subfigure}[t]{\textwidth}
    \centering
    \includegraphics[width=\linewidth]{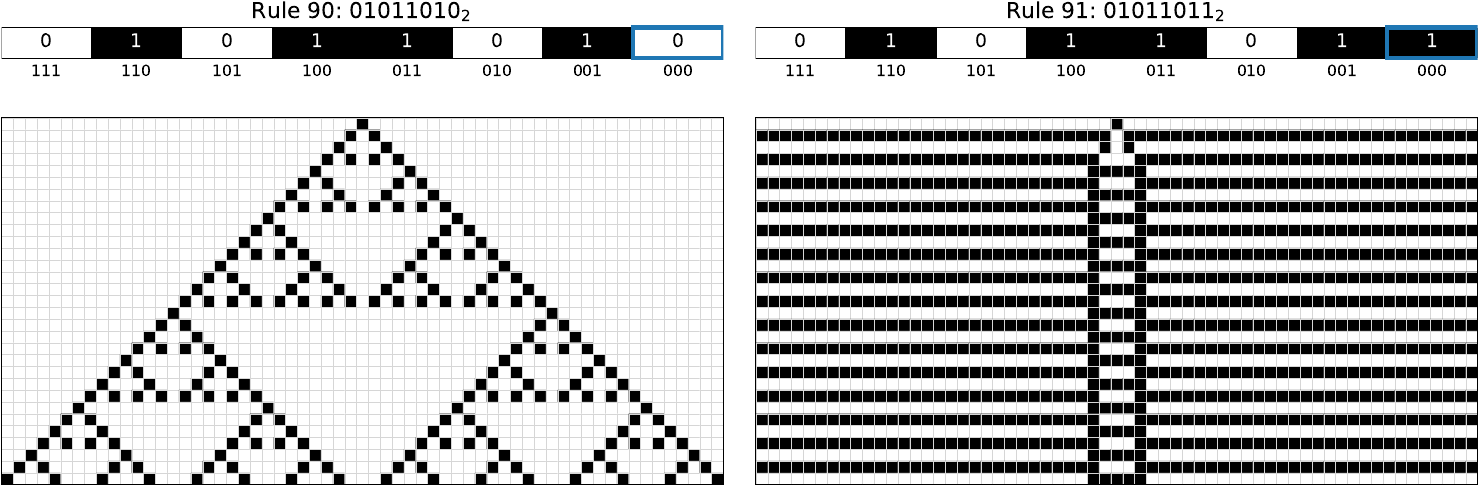}
    \caption{A one-bit transformation between programs.}
    \label{fig:program-space-conditionals}
\end{subfigure}
\caption{\textbf{Conditional descriptions can be short in observation or program space.} \textbf{(a)} Distinct \(4\times4\) blocks satisfy \(x_2=x_1^{\mathsf T}\) and \(x_4=\operatorname{comp}(x_3)\). Because transposition and bitwise complementation are fixed computable transformations, the observation-level costs \(\hat K(x_2\mid x_1)\) and \(\hat K(x_4\mid x_3)\) can be small when these operations are represented by the estimator. \textbf{(b)} ECA Rules \(90=01011010_2\) and \(91=01011011_2\) differ only in the output assigned to neighbourhood \(000\), highlighted in blue. With the simulator, initial condition, and boundary conditions fixed, the corresponding programs \(p_{90}\) and \(p_{91}\) differ by one rule-table bit, so \(\hat K(p_{91}\mid p_{90})\) can be small even when the difference between their resulting spacetime diagrams is large and no comparably simple observation-level transformation relating them is apparent. For each distinct target block, BDM\,2.0 compares the available observation-level and program-level conditional descriptions and selects the lower estimated cost.}
\label{fig:observation-program-conditionals}
\end{figure}

As in BDM\,1.0, repeated occurrences of the same block contribute only logarithmically. The multiplicity correction is given by
\[
\mathcal{L}_{\mathrm{mult}}(X)
=
\sum_{i=1}^{n}\log m_i.
\]

The final BDM\,2.0 estimate is therefore
\[
\mathrm{BDM2}(X)
=
\mathcal{L}_{\mathrm{expl}}(X)
+
\mathcal{L}_{\mathrm{mult}}(X).
\]

Equivalently,
\begin{equation}
\label{eq:bdm2}
\begin{aligned}
\mathrm{BDM2}(X)
=
&
\min_{P\in\mathcal{E}_X}
\min_{\varnothing\neq S\subseteq P}
\Bigg[
\sum_{p_i\in S}\hat K(p_i)
+
\sum_{p_j\in P\setminus S}
\min_{p_i\in S}
\min
\left\{
\begin{aligned}
&\hat K(p_j\mid p_i),\\
&\hat K(x_j\mid x_i)
\end{aligned}
\right\}
\Bigg]
\\
&
+
\sum_{i=1}^{n}\log m_i.
\end{aligned}
\end{equation}

Under this formulation, BDM\,2.0 replaces a decomposition over observations with a decomposition over explanations.

\subsection[BDM 1.0 as an Independent Limit and Reuse Gain]{BDM\,1.0 as an Independent Limit and Reuse Gain}
\label{sec:bdm1-independent-limit}

In this subsection, we establish that BDM\,1.0 is the independent-description limit of BDM\,2.0, obtained when each distinct block is described separately and no reuse is exploited. Choosing this no-reuse construction is always an admissible BDM\,2.0 solution, so BDM\,2.0 cannot exceed the independent-description bound except for the representation overhead required to associate blocks with their shortest or near-shortest explanatory programs. We then show that reuse under the chosen estimators improves this bound by the total reuse gain \(\hat{\Delta}\), and becomes strictly better than BDM\,1.0 whenever \(\hat{\Delta}\) exceeds this baseline representation overhead. Finally, we relate the corresponding theoretical reuse gain \(\Delta\) to algorithmic mutual information, identifying shared algorithmic structure as the source of exploitable reuse.

Recall that \(B(X)=\{x_1,\ldots,x_n\}\) denotes the set of distinct blocks of \(X\), with multiplicities \(m_1,\ldots,m_n\). For each block \(x_i\), let \(\mathcal P(x_i)\) denote the set of candidate programs capable of generating \(x_i\).

We assume that, for each block \(x_i\), the candidate set \(\mathcal P(x_i)\) contains a shortest or near-shortest explanatory program \(p_i^*\) such that \(U(p_i^*)=x_i\). Let \(P^*=\{p_1^*,\ldots,p_n^*\}\). Since \(p_i^*\in\mathcal P(x_i)\) for each \(i\), it follows that \(P^*\in\mathcal E_X\).

For any explanatory set \(P=\{p_1,\ldots,p_n\}\in\mathcal E_X\), define the program--observation discrepancy
\[
c_{\mathrm{rep}}(p_i)
=
\left|
\hat K(x_i)-\hat K(p_i)
\right|,
\]
and its accumulated representation overhead
\[
C_{\mathrm{rep}}(P)
=
\sum_{i=1}^{n}c_{\mathrm{rep}}(p_i).
\]
For the shortest or near-shortest explanatory set \(P^*=\{p_1^*,\ldots,p_n^*\}\), this gives
\[
C_{\mathrm{rep}}(P^*)
=
\sum_{i=1}^{n}
\left|
\hat K(x_i)-\hat K(p_i^*)
\right|.
\]
This quantity measures the discrepancy between the estimated complexities of the blocks and their baseline explanatory programs. The following comparison holds for any finite \(C_{\mathrm{rep}}(P^*)\), but recovering BDM\,1.0 as a close independent-description limit requires this overhead to be small. No analogous small-overhead assumption is imposed on arbitrary \(P\in\mathcal E_X\).

This bounded-overhead condition is consistent with true Kolmogorov complexity. Since \(U(p_i^*)=x_i\), we have \(K(x_i)\leq K(p_i^*)+O(1)\), so \(K(p_i^*)\) cannot be substantially smaller than \(K(x_i)\) without contradicting the minimality of \(K(x_i)\). Conversely, because \(p_i^*\) is shortest or near-shortest, \(K(p_i^*)\leq K(x_i)+O(1)\), up to the bounded excess length of the near-shortest description.

BDM\,1.0 treats each distinct block independently. Since BDM\,2.0 may always choose not to reuse information between explanations, the independent-description strategy of BDM\,1.0 is an admissible special case of BDM\,2.0. The only discrepancy arises from the representation overhead required to relate block descriptions to their shortest or near-shortest explanatory programs.

\begin{theorem}[Independent-explanation upper bound]
\label{thm:bdm2-independent-limit}
Under the assumptions above,
\[
\mathrm{BDM2}(X)
\leq
\mathrm{BDM1}(X)
+
C_{\mathrm{rep}}(P^*).
\]
If the baseline representation discrepancies are uniformly bounded and the number of distinct blocks is bounded by the fixed block vocabulary, then
\[
\mathrm{BDM2}(X)
\leq
\mathrm{BDM1}(X)
+
O(1),
\]
where the \(O(1)\) term is independent of \(|X|\), although its hidden constant may be large.
\end{theorem}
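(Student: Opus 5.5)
The plan is to exhibit one admissible choice in the double minimization of Equation~\eqref{eq:bdm2} whose value is controlled by $\mathrm{BDM1}(X)+C_{\mathrm{rep}}(P^*)$. Since $\mathrm{BDM2}(X)$ is a minimum over all pairs $(P,S)$ with $P\in\mathcal E_X$ and $\varnothing\neq S\subseteq P$, any admissible pair gives an upper bound. The natural candidate is the no-reuse pair $P=P^*$ and $S=P^*$. Here $P^*\in\mathcal E_X$ by the standing assumption $p_i^*\in\mathcal P(x_i)$. The set $S$ is nonempty as long as $X$ has at least one block; I would dispose of the degenerate case $n=0$ separately, where both sides are $0$ or the convention is vacuous.

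With $S=P^*$, the set $P\setminus S$ is empty, so the conditional sum vanishes. The bracketed term therefore equals $\sum_{i=1}^n \hat K(p_i^*)$. Next I would bound each summand termwise using the definition of the discrepancy:
\[
\hat K(p_i^*) \le \hat K(x_i) + \bigl|\hat K(x_i)-\hat K(p_i^*)\bigr| = \hat K(x_i)+c_{\mathrm{rep}}(p_i^*).
\]
Summing over $i$ gives $\mathcal L_{\mathrm{expl}}(X)\le \sum_i \hat K(x_i)+C_{\mathrm{rep}}(P^*)$. The multiplicity term $\sum_i\log m_i$ is identical in $\mathrm{BDM1}$ and $\mathrm{BDM2}$. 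Adding it to both sides yields $\mathrm{BDM2}(X)\le \mathrm{BDM1}(X)+C_{\mathrm{rep}}(P^*)$, provided $\mathrm{BDM1}$ uses the same estimator $\hat K$ and the same decomposition into distinct blocks. I would state this identification explicitly.

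For the asymptotic statement, suppose $c_{\mathrm{rep}}(p_i^*)\le c$ uniformly. Suppose also that the number of distinct blocks satisfies $n\le |\mathcal V|$, where $\mathcal V$ is the fixed block vocabulary; for binary blocks of size $k$ this is at most $2^{k}$. Then $C_{\mathrm{rep}}(P^*)\le c\,|\mathcal V|$, a constant that does not depend on $|X|$, which gives the $O(1)$ form.

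The main obstacle is not computational but interpretive. I must make sure the uniform bound on $c_{\mathrm{rep}}$ does not secretly depend on $X$, and that the vocabulary bound really caps $n$ independently of $|X|$. I would emphasize that $c$ is taken over the finite vocabulary, so it is automatically finite for a fixed estimator table. The hidden constant $c|\mathcal V|$ can be exponential in the block size, which is why the statement warns that it may be large.
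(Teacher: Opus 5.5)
Your proposal is correct and matches the paper's proof step for step. You take the admissible no-reuse pair $P=S=P^*$, so the conditional sum vanishes, bound each $\hat K(p_i^*)\le \hat K(x_i)+c_{\mathrm{rep}}(p_i^*)$, add the shared multiplicity term, and then use $n\le V$ to get $C_{\mathrm{rep}}(P^*)\le V\bar c_{\mathrm{rep}}$. Your extra remarks, that $\mathrm{BDM1}$ must use the same estimator $\hat K$ and decomposition and that the case $n=0$ is degenerate, only make explicit what the paper leaves implicit.
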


\begin{proof}
BDM\,1.0 describes each distinct block independently:
\[
\mathrm{BDM1}(X)
=
\sum_{i=1}^{n}\hat K(x_i)
+
\sum_{i=1}^{n}\log m_i.
\]
Since \(P^*\in\mathcal E_X\), the choice \(P=P^*\) and \(S=P^*\) is admissible. With this choice, \(P\setminus S=\varnothing\), and therefore the reuse term vanishes. Hence BDM\,2.0 admits the independent description
\begin{equation}
\label{eq:bdm2-independent-bound}
\mathrm{BDM2}(X)
\leq
\sum_{i=1}^{n}\hat K(p_i^*)
+
\sum_{i=1}^{n}\log m_i.
\end{equation}
By definition,
\[
\hat K(p_i^*)
\leq
\hat K(x_i)
+
c_{\mathrm{rep}}(p_i^*).
\]
Therefore,
\[
\begin{aligned}
\mathrm{BDM2}(X)
&\leq
\sum_{i=1}^{n}
\left(
\hat K(x_i)+c_{\mathrm{rep}}(p_i^*)
\right)
+
\sum_{i=1}^{n}\log m_i \\
&=
\sum_{i=1}^{n}\hat K(x_i)
+
\sum_{i=1}^{n}\log m_i
+
\sum_{i=1}^{n}c_{\mathrm{rep}}(p_i^*) \\
&=
\mathrm{BDM1}(X)
+
C_{\mathrm{rep}}(P^*).
\end{aligned}
\]

If \(c_{\mathrm{rep}}(p_i^*)\leq\bar c_\mathrm{rep}\) uniformly and the fixed block vocabulary has cardinality \(V\), then \(n\leq V\) and
\[
C_{\mathrm{rep}}(P^*)
=
\sum_{i=1}^{n}c_{\mathrm{rep}}(p_i^*)
\leq
n\bar c_\mathrm{rep}
\leq
V\bar c_\mathrm{rep}.
\]
Hence \(C_{\mathrm{rep}}(P^*)=O(1)\) with respect to \(|X|\), because \(V\) and \(\bar c_\mathrm{rep}\) are fixed independently of the object size. The hidden constant \(V\bar c_\mathrm{rep}\) may nevertheless be large. Consequently,
\[
\mathrm{BDM2}(X)
\leq
\mathrm{BDM1}(X)
+
O(1).
\]
\end{proof}

Theorem~\ref{thm:bdm2-independent-limit} establishes an independent-description baseline. Whenever explanatory programs or observations share algorithmic information, BDM\,2.0 may replace independent descriptions by conditional descriptions. The resulting reduction in description length is captured by the total estimator-level reuse gain \(\hat{\Delta}\).

\begin{theorem}[Reuse-gain improvement bound]
\label{thm:bdm2-reuse-improvement}
Under the assumptions above, suppose that an admissible BDM\,2.0 explanation uses a set \(R\) of reuse relations. For each \((i,j)\in R\), target \(j\) is described relative to source \(i\), either by recovering \(p_j^*\) from \(p_i^*\) or by recovering \(x_j\) directly from \(x_i\). Each target \(j\) appears at most once in \(R\), and every source index remains outside the target set
\[
T_R
=
\{j:(i,j)\in R\}.
\]
Equivalently, every reuse relation points from a selected representative to a non-selected target.

For any explanatory set \(P=\{p_1,\ldots,p_n\}\in\mathcal E_X\), define the pairwise reuse cost
\[
r_{ij}
=
\min
\left\{
\hat K(p_j\mid p_i),
\hat K(x_j\mid x_i)
\right\},
\]
the corresponding estimator-level reuse gain
\[
\hat{\Delta}_{ij}
=
\hat K(p_j)-r_{ij},
\]
and the total estimator-level reuse gain
\[
\hat{\Delta}
=
\sum_{(i,j)\in R}
\hat{\Delta}_{ij}.
\]
For the shortest-program baseline \(P=P^*\), let
\[
r_{ij}^*
=
\min
\left\{
\hat K(p_j^*\mid p_i^*),
\hat K(x_j\mid x_i)
\right\},
\]
and define
\[
\hat{\Delta}_{ij}^*
=
\hat K(p_j^*)-r_{ij}^*,
\qquad
\hat{\Delta}^*
=
\sum_{(i,j)\in R}
\hat{\Delta}_{ij}^*.
\]
Then
\[
\mathrm{BDM2}(X)
\leq
\mathrm{BDM1}(X)
+
C_{\mathrm{rep}}(P^*)
-
\hat{\Delta}^*.
\]
Consequently, if \(\hat{\Delta}^*>0\), then
\[
\mathrm{BDM2}(X)
<
\mathrm{BDM1}(X)
+
C_{\mathrm{rep}}(P^*).
\]
Moreover, if
\[
\hat{\Delta}^*
>
C_{\mathrm{rep}}(P^*),
\]
then
\[
\mathrm{BDM2}(X)
<
\mathrm{BDM1}(X).
\]
\end{theorem}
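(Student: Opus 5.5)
The plan is to exhibit one admissible pair \((P,S)\) in the minimization of Equation~\eqref{eq:bdm2} whose cost is exactly the independent cost \(\sum_i \hat K(p_i^*)\) minus \(\hat{\Delta}^*\). Then we reduce to the chain of inequalities already used in the proof of Theorem~\ref{thm:bdm2-independent-limit}.

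Take \(P=P^*\), which is admissible since \(P^*\in\mathcal E_X\). Set \(S=\{p_i^*: i\notin T_R\}\). This \(S\) is nonempty: every source index of \(R\) lies outside \(T_R\), so if \(R\neq\varnothing\) some source lies in \(S\). If \(R=\varnothing\), then \(S=P^*\) and the claim is exactly Theorem~\ref{thm:bdm2-independent-limit}, since then \(\hat{\Delta}^*=0\).

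For this choice, the bracket in Equation~\eqref{eq:bdm2} equals
\[
\sum_{i\notin T_R}\hat K(p_i^*)+\sum_{j\in T_R}\min_{p_i\in S} r_{ij}^*.
\]
Each \(j\in T_R\) has a unique \((i,j)\in R\) with \(p_i^*\in S\), so the inner minimum is at most \(r_{ij}^*\). The second sum is therefore at most
\[
\sum_{(i,j)\in R} r_{ij}^*=\sum_{(i,j)\in R}\bigl(\hat K(p_j^*)-\hat{\Delta}_{ij}^*\bigr).
\]
Since each target appears exactly once in \(R\), summing over \(T_R\) recombines with the first sum to give
\[
\sum_{i=1}^n \hat K(p_i^*)-\hat{\Delta}^*.
\]
Adding \(\mathcal L_{\mathrm{mult}}(X)\) and minimizing over all admissible \((P,S)\) yields
\[
\mathrm{BDM2}(X)\le \sum_i\hat K(p_i^*)+\sum_i\log m_i-\hat{\Delta}^*.
\]
The per-block inequality \(\hat K(p_i^*)\le \hat K(x_i)+c_{\mathrm{rep}}(p_i^*)\) from the previous proof then gives
\[
\mathrm{BDM2}(X)\le\mathrm{BDM1}(X)+C_{\mathrm{rep}}(P^*)-\hat{\Delta}^*.
\]

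The two consequences are immediate. If \(\hat{\Delta}^*>0\), subtracting a positive quantity makes the bound strict. If \(\hat{\Delta}^*>C_{\mathrm{rep}}(P^*)\), then \(C_{\mathrm{rep}}(P^*)-\hat{\Delta}^*<0\), which gives \(\mathrm{BDM2}(X)<\mathrm{BDM1}(X)\).

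The main obstacle is the bookkeeping that turns \(R\) into a valid \((S,P\setminus S)\) split. Specifically:
\begin{itemize}
\item The targets must be disjoint from the sources, so that every source is actually in \(S\).
\item Each target must be counted once.
\item The \(\min_{p_i\in S}\) in the definition can only lower the cost relative to the designated source.
\end{itemize}
The hypotheses of the theorem are exactly the ones that make this split work.

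A further, minor point is that the theorem also defines \(\hat{\Delta}\) for a general \(P\). The argument could be run with a general \(P\) instead, but that would leave an overhead \(C_{\mathrm{rep}}(P)\) in place of \(C_{\mathrm{rep}}(P^*)\). The statement is phrased with \(P^*\), so I would restrict to \(P^*\).
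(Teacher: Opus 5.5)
Your proposal is correct and follows the paper's proof essentially step for step. It uses the same admissible choice \(P=P^*\), \(S=P^*\setminus\{p_j^*:j\in T_R\}\), the same \(R=\varnothing\) and nonemptiness case split, the same bound of the inner minimum by the designated source's \(r_{ij}^*\), the same regrouping into \(\sum_k\hat K(p_k^*)-\hat{\Delta}^*\), and the same final use of \(\hat K(p_k^*)\le\hat K(x_k)+c_{\mathrm{rep}}(p_k^*)\); your closing remark about a general \(P\) matches the paper's later observation that the construction gives the bound with \(C_{\mathrm{rep}}(P)\) in place of \(C_{\mathrm{rep}}(P^*)\).
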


\begin{proof}
Let
\[
T_R
=
\{j:(i,j)\in R\}
\]
be the set of reused targets. Consider the choice
\[
P=P^*,
\qquad
S
=
P^*
\setminus
\{p_j^*:j\in T_R\}.
\]
If \(R=\varnothing\), then \(S=P^*\) and the result reduces to Theorem~\ref{thm:bdm2-independent-limit} with \(\hat{\Delta}^*=0\). Otherwise, every reuse relation has a source outside \(T_R\), so \(S\) contains at least one source program. Hence \(S\neq\varnothing\).

For each \((i,j)\in R\), the independent description pays \(\hat K(p_j^*)\), whereas the reused description pays \(r_{ij}^*\). The corresponding estimator-level reuse gain is therefore
\[
\hat{\Delta}_{ij}^*
=
\hat K(p_j^*)-r_{ij}^*.
\]
Summing over all reuse relations gives
\[
\hat{\Delta}^*
=
\sum_{(i,j)\in R}
\hat{\Delta}_{ij}^*.
\]
For the admissible choice above, every source index \(i\) lies outside \(T_R\), so \(p_i^*\in S\). Therefore, for each \((i,j)\in R\), the conditional-description cost of the target \(p_j^*\) satisfies
\[
\min_{p_s^*\in S}
\min
\left\{
\hat K(p_j^*\mid p_s^*),
\hat K(x_j\mid x_s)
\right\}
\leq
\min
\left\{
\hat K(p_j^*\mid p_i^*),
\hat K(x_j\mid x_i)
\right\}
=
r_{ij}^*.
\]
Because \(\mathrm{BDM2}(X)\) minimizes over all admissible explanatory sets and representative subsets, its value is no greater than the cost of this particular reuse construction. Since each target appears at most once in \(R\), summing over the reused targets gives
\[
\mathrm{BDM2}(X)
\leq
\sum_{k\notin T_R}\hat K(p_k^*)
+
\sum_{(i,j)\in R}r_{ij}^*
+
\sum_{k=1}^{n}\log m_k.
\]
By the definition of \(\hat{\Delta}^*\),
\[
\sum_{k\notin T_R}\hat K(p_k^*)
+
\sum_{(i,j)\in R}r_{ij}^*
=
\sum_{k=1}^{n}\hat K(p_k^*)
-
\hat{\Delta}^*.
\]
Hence,
\[
\mathrm{BDM2}(X)
\leq
\sum_{k=1}^{n}\hat K(p_k^*)
+
\sum_{k=1}^{n}\log m_k
-
\hat{\Delta}^*.
\]
This is the independent-description upper bound in Eq.~\eqref{eq:bdm2-independent-bound}, reduced by the total estimator-level reuse gain \(\hat{\Delta}^*\). Applying the same representation-overhead relation used in Theorem~\ref{thm:bdm2-independent-limit},
\[
\hat K(p_k^*)
\leq
\hat K(x_k)
+
c_{\mathrm{rep}}(p_k^*),
\]
we obtain
\[
\begin{aligned}
\mathrm{BDM2}(X)
&\leq
\sum_{k=1}^{n}\hat K(x_k)
+
\sum_{k=1}^{n}\log m_k
+
\sum_{k=1}^{n}c_{\mathrm{rep}}(p_k^*)
-
\hat{\Delta}^* \\
&=
\mathrm{BDM1}(X)
+
C_{\mathrm{rep}}(P^*)
-
\hat{\Delta}^*.
\end{aligned}
\]
Therefore, if \(\hat{\Delta}^*>0\), then
\[
\mathrm{BDM2}(X)
<
\mathrm{BDM1}(X)
+
C_{\mathrm{rep}}(P^*).
\]
Furthermore, if
\[
\hat{\Delta}^*
>
C_{\mathrm{rep}}(P^*),
\]
then
\[
\mathrm{BDM2}(X)
<
\mathrm{BDM1}(X).
\]
\end{proof}

The previous theorem shows that reuse improves the independent-description upper bound whenever \(\hat{\Delta}^*>0\), and that BDM\,2.0 falls below BDM\,1.0 when the shortest-program reuse gain exceeds the baseline representation overhead \(C_{\mathrm{rep}}(P^*)\). In Appendix~\ref{app:bdm2-approximation-error}, we extend this comparison to the true Kolmogorov complexity \(K(X)\), showing that under a common reconstruction overhead \(\Gamma(X)\) sufficient for both estimators,
\[
\hat{\Delta}^*
\geq
C_{\mathrm{rep}}(P^*)
\]
implies
\[
\left|
\bigl(\mathrm{BDM2}(X)+\Gamma(X)\bigr)-K(X)
\right|
\leq
\left|
\bigl(\mathrm{BDM1}(X)+\Gamma(X)\bigr)-K(X)
\right|.
\]

We now relate reuse gain to algorithmic mutual information. The following result defines a theoretical reuse-gain quantity \(\Delta\) directly in terms of the true Kolmogorov complexity \(K\).

\begin{theorem}[Mutual information and total reuse gain]
\label{thm:mutual-information-reuse-gain}
For each reuse relation \((i,j)\in R\), define the theoretical pairwise reuse gains
\[
\Delta_{ij}^{P}
=
K(p_j)-K(p_j\mid p_i),
\qquad
\Delta_{ij}^{X}
=
K(p_j)-K(x_j\mid x_i),
\]
corresponding respectively to program-space reuse and observation-space recovery, both measured relative to the independent program-description cost \(K(p_j)\). We set
\[
\Delta_{ij}
=
\max\{\Delta_{ij}^{P},\Delta_{ij}^{X}\},
\]
and define the total theoretical reuse gain associated with \(R\) by
\[
\Delta
=
\sum_{(i,j)\in R}\Delta_{ij}.
\]
Algorithmic mutual information is defined, up to an additive constant, by
\[
I(a:b)=K(a)-K(a\mid b^*)+O(1),
\]
where \(b^*\) denotes a shortest description of \(b\)~\cite{li_introduction_2019}. For each program \(p_i\), let \(q_i^*\)\footnote{The notation \(q_i^*\) is used to avoid confusion with \(p_i^*\), which denotes the selected shortest or near-shortest generator of \(x_i\) in \(\mathcal P(x_i)\).} denote a shortest description of \(p_i\), and let \(x_i^*\) denote a shortest description of \(x_i\). Let
\[
\eta_i^P=O(\log K(p_i)),
\qquad
\eta_i^X=O(\log K(x_i))
\]
denote the overheads required to compare conditioning on \(p_i\) and \(x_i\) with conditioning on \(q_i^*\) and \(x_i^*\), respectively~\cite{li_introduction_2019}. Then
\[
\Delta
\geq
\sum_{(i,j)\in R}
\max\{
I(p_j:p_i)-\eta_i^P,
I(p_j:x_i)-\eta_i^X
\}
-
O(|R|).
\]
Consequently, for a fixed reuse scheme \(R\), the total theoretical reuse gain is lower-bounded by the accumulated program-program and program-block algorithmic mutual information, up to logarithmic conditioning overhead.
\end{theorem}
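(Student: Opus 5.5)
The bound is proved one reuse relation at a time and then summed over \(R\). The key observation is that each \(\Delta_{ij}\) is a maximum of two quantities, so it dominates each of them separately. It therefore suffices to lower-bound \(\Delta_{ij}^{P}\) by \(I(p_j:p_i)-\eta_i^P-O(1)\) and \(\Delta_{ij}^{X}\) by \(I(p_j:x_i)-\eta_i^X-O(1)\). Taking the maximum of these two lower bounds and summing, each pair contributes at most an \(O(1)\) loss, which yields the \(-O(|R|)\) term.

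\emph{Program-space term.} By the definition of algorithmic mutual information recalled in the statement,
\[
I(p_j:p_i)=K(p_j)-K(p_j\mid q_i^*)+O(1).
\]
The remaining task is to compare \(K(p_j\mid p_i)\) with \(K(p_j\mid q_i^*)\). One direction is free: from \(q_i^*\) one recovers \(p_i\) by running it, so \(K(p_j\mid q_i^*)\le K(p_j\mid p_i)+O(1)\). That is the wrong direction, however, since we need an upper bound on \(K(p_j\mid p_i)\). For this I would invoke the standard fact~\cite{li_introduction_2019} that from \(p_i\) together with \(K(p_i)\) one can enumerate candidate programs and recover a shortest description \(q_i^*\). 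Encoding \(K(p_i)\) self-delimitingly costs \(O(\log K(p_i))\), which gives
\[
K(p_j\mid p_i)\le K(p_j\mid q_i^*)+\eta_i^P .
\]
Substituting into \(\Delta_{ij}^P=K(p_j)-K(p_j\mid p_i)\) then gives
\[
\Delta_{ij}^P\ge I(p_j:p_i)-\eta_i^P-O(1).
\]

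\emph{Observation-space term.} Here \(\Delta_{ij}^X\) is defined with \(K(x_j\mid x_i)\), not \(K(p_j\mid x_i)\), so one extra step is needed. The bound is only meaningful when \(p_j\) is a genuine generator of \(x_j\). Since \(U(p_j)=x_j\), we have
\[
K(x_j\mid x_i)\le K(p_j\mid x_i)+O(1),
\]
because a conditional program for \(p_j\) followed by a fixed run of \(U\) produces \(x_j\). Then, exactly as in the program-space case, conditioning on \(x_i\) versus \(x_i^*\) costs at most \(\eta_i^X\):
\[
K(p_j\mid x_i)\le K(p_j\mid x_i^*)+\eta_i^X .
\]
Finally, \(I(p_j:x_i)=K(p_j)-K(p_j\mid x_i^*)+O(1)\). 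Chaining these three facts gives
\[
\Delta_{ij}^X\ge I(p_j:x_i)-\eta_i^X-O(1).
\]

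The step I expect to be the main obstacle is the conditioning conversion from \(p_i\) (or \(x_i\)) to its shortest description. This is where the logarithmic overhead genuinely arises, and its direction must be stated correctly: recovering \(q_i^*\) from \(p_i\) requires the advice \(K(p_i)\). It is also the point where the constants hidden in \(\eta_i^P\) and \(\eta_i^X\) must be fixed uniformly, independently of \(i\) and \(j\). Once these are fixed, the maximum-and-sum step is routine. The \(O(1)\) constants depend only on \(U\), so summing over \(R\), where each target appears at most once, gives exactly the claimed inequality with the \(-O(|R|)\) term.
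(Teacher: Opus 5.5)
Your proposal is correct and follows the paper's proof essentially step for step: the same per-relation bounds $K(p_j\mid p_i)\le K(p_j\mid q_i^*)+\eta_i^P$ and $K(x_j\mid x_i)\le K(p_j\mid x_i)+O(1)\le K(p_j\mid x_i^*)+\eta_i^X+O(1)$, followed by taking the maximum and summing the $O(1)$ losses over $R$. Your explicit justification of the conditioning overhead, recovering $q_i^*$ from $p_i$ given a self-delimiting encoding of $K(p_i)$, spells out the standard fact that the paper only cites.
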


\begin{proof}
For each reuse relation \((i,j)\in R\), the standard relation between conditioning on an object and conditioning on its shortest description gives
\[
K(p_j\mid p_i)
\leq
K(p_j\mid q_i^*)+\eta_i^P.
\]
Therefore,
\[
\begin{aligned}
\Delta_{ij}^{P}
&=
K(p_j)-K(p_j\mid p_i) \\
&\geq
K(p_j)-K(p_j\mid q_i^*)-\eta_i^P \\
&=
I(p_j:p_i)-\eta_i^P-O(1).
\end{aligned}
\]

For the observation-space branch, a conditional description of \(p_j\) given \(x_i\) can be followed by executing \(p_j\) to recover \(x_j\). Hence,
\[
K(x_j\mid x_i)
\leq
K(p_j\mid x_i)+O(1).
\]
Moreover,
\[
K(p_j\mid x_i)
\leq
K(p_j\mid x_i^*)+\eta_i^X.
\]
Therefore,
\[
\begin{aligned}
\Delta_{ij}^{X}
&=
K(p_j)-K(x_j\mid x_i) \\
&\geq
K(p_j)-K(p_j\mid x_i)-O(1) \\
&\geq
K(p_j)-K(p_j\mid x_i^*)-\eta_i^X-O(1) \\
&=
I(p_j:x_i)-\eta_i^X-O(1).
\end{aligned}
\]

Since
\[
\Delta_{ij}
=
\max\{\Delta_{ij}^{P},\Delta_{ij}^{X}\},
\]
we obtain
\[
\Delta_{ij}
\geq
\max\{
I(p_j:p_i)-\eta_i^P,
I(p_j:x_i)-\eta_i^X
\}
-
O(1).
\]
Summing over all \((i,j)\in R\) gives
\[
\Delta
\geq
\sum_{(i,j)\in R}
\max\{
I(p_j:p_i)-\eta_i^P,
I(p_j:x_i)-\eta_i^X
\}
-
O(|R|).
\]

\end{proof}

The distinction between \(\hat{\Delta}\) and \(\Delta\) separates the computable estimator-level gain used by BDM\,2.0 from the theoretical gain predicted by algorithmic information. In practice, the improvement bound depends on \(\hat{\Delta}\), while the mutual-information theorem explains when substantial reuse should exist at the theoretical level.

\subsection[Estimator Consistency and Reuse Gains]
{Estimator Consistency and Reuse Gains}
\label{sub:estimator-consistency-reuse-gain}

The previous subsection established three results. First, BDM\,2.0 contains the independent BDM\,1.0 description as an admissible limiting case. Second, estimator-level reuse improves this independent-description bound whenever the total reuse gain \(\hat{\Delta}\) is large enough to exceed the relevant overheads. Third, at the theoretical level, the total reuse gain \(\Delta\) is lower-bounded by accumulated algorithmic mutual information among the reused programs or blocks. We now separate the structural assumptions used in the comparison from the additional estimator-consistency conditions required for the computable quantities \(\hat K\) and \(\hat K(\cdot\mid\cdot)\) to reveal these theoretical reuse gains in practice.

The comparison above relied on two structural assumptions:
\begin{enumerate}
    \item \emph{Candidate-program coverage:} for each block \(x_i\), the candidate set \(\mathcal P(x_i)\) contains a shortest or near-shortest explanatory program \(p_i^*\), with \(p_i^*\in\mathcal P(x_i)\) and \(U(p_i^*)=x_i\). Consequently, \(P^*=\{p_1^*,\ldots,p_n^*\}\) is an admissible element of \(\mathcal E_X\).

    \item \emph{Bounded baseline representation overhead:} for each selected explanatory program \(p_i^*\), the estimated object-level and program-level complexities have the controlled discrepancy
    \[
    c_{\mathrm{rep}}(p_i^*)
    =
    \left|\hat K(x_i)-\hat K(p_i^*)\right|.
    \]
    In CTM-based implementations, this bounded representation overhead can be estimated empirically by an affine alignment between output-level and program-level CTM tables, as described in Appendix~\ref{app:ctm-affine-calibration}.
\end{enumerate}

These assumptions establish a well-defined independent-description baseline for BDM\,2.0. However, they do not by themselves ensure that estimated reuse is meaningful. A conditional estimator should not assign arbitrarily small values to \(\hat K(x_j\mid x_i)\) or \(\hat K(p_j\mid p_i)\) unless the conditioning object contains algorithmic information about the object being described. BDM\,2.0 therefore requires consistency between estimated conditional reductions and the underlying algorithmic mutual information.

We therefore identify two additional estimator-level requirements governing when admissible reuse descriptions yield a reliable improvement over BDM\,1.0:
\begin{enumerate}
    \setcounter{enumi}{2}
    \item \emph{Reuse-validation overhead:} the overhead \(c_{\mathrm{reuse}}\) associated with specifying, selecting, or validating reuse relations must remain small enough that the accumulated raw reuse gain exceeds the combined representation and reuse-validation threshold \(C_{\mathrm{rep}}(P)+|R|c_{\mathrm{reuse}}\). This overhead is not part of the BDM\,2.0 objective; it is a conservative margin used to decide when estimated reuse is large enough to support a reliable improvement claim.

    \item \emph{Mutual-information consistency:} estimated reuse gains should track the corresponding theoretical algorithmic mutual informations up to a fixed scale and bounded error. This condition connects the computable gains used by BDM\,2.0 to the theoretical gains identified in Theorem~\ref{thm:mutual-information-reuse-gain}.
\end{enumerate}

Theorem~\ref{thm:bdm2-reuse-improvement} shows that BDM\,2.0 improves upon BDM\,1.0 whenever the estimator-level reuse gain \(\hat{\Delta}^*\) exceeds the baseline representation overhead \(C_{\mathrm{rep}}(P^*)\). More generally, for any fixed \(P\in\mathcal E_X\) and reuse set \(R\) satisfying the same admissibility conditions as in Theorem~\ref{thm:bdm2-reuse-improvement}, the same construction gives
\[
\mathrm{BDM2}(X)
\leq
\mathrm{BDM1}(X)
+
C_{\mathrm{rep}}(P)
-
\sum_{(i,j)\in R}
\left(
\hat K(p_j)-r_{ij}
\right).
\]
In practice, however, one may require an additional validation margin for specifying, selecting, or accepting reuse relations. The following definition gives a conservative sufficient condition in terms of the raw conditional reuse gain and the accumulated reuse-validation threshold.

\begin{definition}[Reliability threshold for reuse]
\label{def:bdm2-reliability-threshold}
Let \(P=\{p_1,\ldots,p_n\}\in\mathcal E_X\) be an admissible explanatory set, and let \(C_{\mathrm{rep}}(P)\) denote its accumulated program--observation representation overhead. Let \(c_{\mathrm{reuse}}\geq 0\) denote an upper bound on the additional description cost required to realize each reuse relation beyond its estimated conditional cost, including the overhead of identifying the available source, selecting the program- or observation-level reuse path, and invoking the corresponding conditional decoding mechanism. For a fixed coding scheme and bounded candidate vocabulary, these operational costs are accounted for by \(c_{\mathrm{reuse}}\). For a reuse set \(R\) satisfying the same admissibility conditions as in Theorem~\ref{thm:bdm2-reuse-improvement}, define the net reliable reuse gain as
\[
\hat{\Delta}_{\mathrm{rel}}(P,R)
=
\sum_{(i,j)\in R}
\left(
\hat K(p_j)-r_{ij}
\right)
-
|R|c_{\mathrm{reuse}}.
\]
The pair \((P,R)\) satisfies the reliability threshold whenever
\[
\hat{\Delta}_{\mathrm{rel}}(P,R)
>
C_{\mathrm{rep}}(P),
\]
or equivalently,
\[
\sum_{(i,j)\in R}
\left(
\hat K(p_j)-r_{ij}
\right)
>
C_{\mathrm{rep}}(P)
+
|R|c_{\mathrm{reuse}}.
\]
Any pair \((P,R)\) satisfying this threshold guarantees
\[
\mathrm{BDM2}(X)
<
\mathrm{BDM1}(X).
\]
\end{definition}
Definition~\ref{def:bdm2-reliability-threshold} gives a sufficient condition for BDM\,2.0 to improve upon BDM\,1.0. The threshold \(C_{\mathrm{rep}}(P)+|R|c_{\mathrm{reuse}}\) combines the program--observation representation overhead with a conservative allowance for per-relation costs not captured by \(r_{ij}\), including source identification, reuse-path selection, and conditional decoding. Any reuse gain exceeding this threshold guarantees a reduction relative to the independent-description baseline. Because \(c_{\mathrm{reuse}}\) serves only as a certification margin, it is not included in the operational BDM\,2.0 objective.

It remains to specify when the estimated quantities reflect the theoretical information underlying the reuse gains in Theorem~\ref{thm:mutual-information-reuse-gain}. For program-space reuse, define
\[
I_{ij}^{P}
=
I(p_j:p_i).
\]
For observation-space recovery, define the cross-representation quantity
\[
I_{ij}^{PX}
=
I(p_j:x_i).
\]
Direct estimation of \(I_{ij}^{PX}\) is not required to obtain a conservative sufficient signal of shared structure supporting observation-space recovery. Since \(p_j\mapsto U(p_j)=x_j\), the deterministic algorithmic data-processing inequality gives \cite{grunwald_shannon_2004}
\[
I_{ij}^{X}
:=
I(x_j:x_i)
\leq
I_{ij}^{PX}
+
O(1).
\]
Thus, large mutual information between the observed blocks implies comparably large mutual information between the target program and the source block. The converse need not hold, so a small output-level signal does not exclude reuse detectable in program space or only across representations.

Algorithmic mutual information is strictly given by \(I(x:y)=K(x)-K(x\mid y^*)+O(1)\), where \(y^*\) is a shortest description of the conditioning object \(y\). Because \(y^*\) is not computable in general, the operational estimator conditions directly on the available representation \(y\). Replacing \(K(x\mid y^*)\) with \(K(x\mid y)\) introduces at most an \(O(\log K(y))\) discrepancy, which we absorb into the consistency tolerance. We therefore define
\[
\hat I_{ij}^{P}
=
\hat K(p_j)-\hat K(p_j\mid p_i),
\qquad
\hat I_{ij}^{X}
=
\hat K(x_j)-\hat K(x_j\mid x_i).
\]
The program-space signal \(\hat I_{ij}^{P}\) estimates \(I_{ij}^{P}\), whereas the output-space signal \(\hat I_{ij}^{X}\) estimates \(I_{ij}^{X}\) and provides a conservative proxy for the cross-representation quantity \(I_{ij}^{PX}\). 

We formalise this requirement as \emph{mutual-information consistency}. For constants \(\alpha_P,\alpha_X>0\) and bounded errors \(\varepsilon_P,\varepsilon_X\), we require
\[
\left|
\hat I_{ij}^{P}
-
\alpha_P I_{ij}^{P}
\right|
\leq
\varepsilon_P,
\qquad
\left|
\hat I_{ij}^{X}
-
\alpha_X I_{ij}^{X}
\right|
\leq
\varepsilon_X
\]
for the reuse relations under consideration.

This condition prevents arbitrary reductions in estimated conditional complexity from being interpreted as evidence of reuse. When \(I_{ij}^{P}\) or \(I_{ij}^{X}\) is small, the corresponding estimated signal remains small up to bounded error. When it is large, the estimator preserves a proportional signal. Mutual-information consistency therefore ensures that estimated conditional reductions track genuine shared algorithmic information. For observation-space recovery, consistency with \(I_{ij}^{X}\) provides a sufficient but conservative validation of the cross-representation information identified in Theorem~\ref{thm:mutual-information-reuse-gain}.

Together, the reuse-overhead condition and mutual-information consistency give the practical estimator requirement for BDM\,2.0 improvement: estimated conditional reductions must be large enough to overcome representation and conditional-reuse overheads, while remaining supported by corresponding algorithmic mutual information.

\section[Towards Operationalizing BDM 2.0]{Towards Operationalizing BDM\,2.0}

The previous subsections established the conditions under which BDM\,2.0 can exploit reusable explanations. We now outline a concrete pathway for constructing such an estimator using the Coding Theorem Method (CTM).

This construction should be understood as an operational roadmap rather than a completed implementation. The goal is to instantiate the objects appearing in the definition of BDM\,2.0, including \(\mathcal P(x_i)\), \(\mathcal E_X\), \(\hat K(p)\), and \(\hat K(\cdot\mid\cdot)\), from finite CTM tables, and to estimate or bound the implementation-level counterparts of the representation and reuse overheads \(C_{\mathrm{rep}}(P)\) and \(c_{\mathrm{reuse}}\). 

\subsection{Program Representations and Candidate Generators}

Standard CTM estimates the algorithmic complexity of finite objects by enumerating a finite class of programs and recording the frequency with which each output is produced. In the usual BDM\,1.0 setting, the object of interest is the output block \(x\). For BDM\,2.0, however, the generating programs must also be available as objects of analysis. A Turing machine, for example, may be represented by its transition table or by the corresponding state diagram. More generally, any finite program model used by CTM induces a finite representation of programs that can itself be analysed.

Let \(\mathcal M\) denote the finite program class explored by CTM under a fixed representation, size bound, and runtime cutoff. For each block \(x\), define
\[
\mathcal P_{\mathrm{CTM}}(x)
=
\{p\in\mathcal M:U(p)=x\}.
\]
Unlike the abstract candidate set \(\mathcal P(x)\), this set is finite and can be recovered directly from the CTM enumeration. It is nonempty precisely when \(x\) is generated by at least one program in \(\mathcal M\).

Program-level complexity additionally requires a fixed finite encoding \(\operatorname{enc}(p)\) of each \(p\in\mathcal M\). A program-level CTM table can then assign
\[
\mathrm{CTM}_P(p)
=
\mathrm{CTM}\bigl(\operatorname{enc}(p)\bigr),
\]
where the encoding may be based, for example, on the transition table or state diagram of the machine. The same encoding must be used throughout the construction, since finite CTM estimates can depend on the chosen representation.

For each block \(x_i\), select
\[
p_i^*
\in
\arg\min_{p\in\mathcal P_{\mathrm{CTM}}(x_i)}
\mathrm{CTM}_P(p).
\]
Thus, \(p_i^*\) is the simplest generator found within the enumerated class \(\mathcal M\), rather than necessarily a globally shortest program for \(x_i\). Provided that every \(\mathcal P_{\mathrm{CTM}}(x_i)\) is nonempty, the collection
\[
P^*
=
\{p_1^*,\ldots,p_n^*\}
\]
defines a CTM-instantiated explanatory set for the distinct blocks of \(X\).

More generally, the enumeration induces the finite explanatory space
\[
\mathcal E_X^{\mathrm{CTM}}
=
\left\{
\{p_1,\ldots,p_n\}:
p_i\in\mathcal P_{\mathrm{CTM}}(x_i)
\text{ for every }i
\right\}.
\]
Each block is therefore associated with an output-level estimate \(\mathrm{CTM}_X(x_i)\), one or more candidate generators, and their program-level estimates. These quantities provide the finite search space required for program selection and reuse optimization in BDM\,2.0.

\subsection{Simplest-Program--Output Validation}
\label{sec:simplest-program-output-validation}

The previous construction assigns to each block \(x_i\) a simplest available generator \(p_i^*\), whose program-level CTM estimate must be comparable with the CTM estimate of \(x_i\). This compatibility is required for the comparison with the BDM\,1.0 independent-description limit established in Theorem~\ref{thm:bdm2-independent-limit}.

Let \(\mathrm{CTM}_X(x_i)\) denote the output-level estimate and \(\mathrm{CTM}_P(p_i)\) the estimate of a candidate program representation. Because these estimates may be obtained from different finite CTM spaces, their scales need not coincide. We therefore estimate constants \(\alpha>0\) and \(\gamma\), fixed for the chosen representations, and define
\[
\tau_{\mathrm{rep}}(p_i)
=
\left|
\mathrm{CTM}_X(x_i)
-
\bigl(\alpha\,\mathrm{CTM}_P(p_i)+\gamma\bigr)
\right|.
\]
The affine calibration accounts for systematic differences between the output and program CTM scales, while \(\tau_{\mathrm{rep}}(p_i)\) records the remaining representation discrepancy. Its derivation from finite CTM counts is given in Appendix~\ref{app:ctm-affine-calibration}.

For an explanatory set \(P=\{p_1,\ldots,p_n\}\in\mathcal E_X^{\mathrm{CTM}}\), define the accumulated representation tolerance 
\[ 
\mathcal T_{\mathrm{rep}}(P)
=
\sum_{i=1}^{n}\tau_{\mathrm{rep}}(p_i). 
\]
This is the CTM-construction analogue of the representation overhead \(C_{\mathrm{rep}}(P)\). It is not universal, but depends on the program encoding, calibrated CTM tables, and block vocabulary.

For the simplest available explanatory set \(P^*=\{p_1^*,\ldots,p_n^*\}\),
\[
\tau_{\mathrm{rep}}(p_i^*)
=
\left|
\mathrm{CTM}_X(x_i)
-
\bigl(\alpha\,\mathrm{CTM}_P(p_i^*)+\gamma\bigr)
\right|,
\]
and
\[
\mathcal T_{\mathrm{rep}}(P^*)
=
\sum_{i=1}^{n}\tau_{\mathrm{rep}}(p_i^*).
\]
This quantity is the implementation-level counterpart of \(C_{\mathrm{rep}}(P^*)\) in Theorem~\ref{thm:bdm2-independent-limit}. A valid construction should yield uniformly small local discrepancies, and hence a small value of \(\mathcal T_{\mathrm{rep}}(P^*)\). If no such calibration exists, the comparison cannot be supported with a small representation overhead.

In what follows, the calibration is absorbed into the notation. Thus, \(\mathrm{CTM}_P\) and the corresponding conditional CTM tables are understood to be expressed on the calibrated output-space CTM scale, while \(\mathcal T_{\mathrm{rep}}(P)\) records the remaining accumulated program--output discrepancy.

\subsection{Conditional CTM for Reuse Estimation}

BDM\,2.0 requires conditional estimates for describing one block or generator relative to another. A CTM implementation therefore requires the tables
\[
\mathrm{CTM}_{X\mid X}(x_j\mid x_i),
\qquad
\mathrm{CTM}_{P\mid P}(p_j\mid p_i),
\]
which encode reuse in output and program space, respectively. Throughout this section, program-level quantities are understood on the calibrated output-space CTM scale introduced above.

For a fixed explanatory set \(P=\{p_1,\ldots,p_n\}\in\mathcal E_X^{\mathrm{CTM}}\), define the reuse cost
\[
r_{ij}^{\mathrm{CTM}}
=
\min
\left\{
\mathrm{CTM}_{P\mid P}(p_j\mid p_i),
\mathrm{CTM}_{X\mid X}(x_j\mid x_i)
\right\}.
\]
This is the operational counterpart of \(r_{ij}\) in the theoretical formulation and is meaningful only when both conditional tables are expressed on a common CTM scale.

The conditional tables should satisfy a practical analogue of the principle that \emph{information cannot hurt}. Since a conditional program may ignore its auxiliary input, \(K(x\mid y)\leq K(x)+O(1)\), so estimated conditional costs should not exceed their unconditional counterparts beyond implementation tolerance.
\[
\max_{i\neq j}
\mathrm{CTM}_{X\mid X}(x_j\mid x_i)
\approx
\mathrm{CTM}_{X}(x_j),
\qquad
\max_{i\neq j}
\mathrm{CTM}_{P\mid P}(p_j\mid p_i)
\approx
\mathrm{CTM}_{P}(p_j).
\]
After normalization, equality represents an empirically uninformative condition, while smaller values indicate reusable information. This normalization aligns conditional and unconditional estimates within each space, whereas the affine calibration aligns the program and output spaces.

For each reuse relation \((i,j)\), define the program-space reuse gain
\[
\hat{\Delta}_{ij}^{P,\mathrm{CTM}}
=
\mathrm{CTM}_{P}(p_j)
-
\mathrm{CTM}_{P\mid P}(p_j\mid p_i),
\]
and the observation-space recovery gain
\[
\hat{\Delta}_{ij}^{X,\mathrm{CTM}}
=
\mathrm{CTM}_{P}(p_j)
-
\mathrm{CTM}_{X\mid X}(x_j\mid x_i).
\]
Both are measured relative to the same independent program-description baseline. The total CTM reuse gain is
\[
\hat{\Delta}^{\mathrm{CTM}}_R(P)
=
\sum_{(i,j)\in R}
\max
\left\{
\hat{\Delta}_{ij}^{P,\mathrm{CTM}},
\hat{\Delta}_{ij}^{X,\mathrm{CTM}}
\right\}.
\]

Let \(\tau_{\mathrm{reuse}}\geq 0\) bound the additional decoding and composition cost associated with each reuse relation. For a representative set \(S\) and reuse set \(R\), the parameterized CTM construction improves upon the independent program-description baseline whenever
\[
\hat{\Delta}^{\mathrm{CTM}}_R(P)
>
\mathcal T_{\mathrm{rep}}(S)
+
|R|\tau_{\mathrm{reuse}},
\]
where
\[
\mathcal T_{\mathrm{rep}}(S)
=
\sum_{p_i\in S}\tau_{\mathrm{rep}}(p_i).
\]
Here, \(\mathcal T_{\mathrm{rep}}(S)\) accounts for the residual program--output discrepancies of the independently described representatives, while \(\tau_{\mathrm{reuse}}\) accounts for the additional cost of realizing each conditional description.

The calibrated quantities yield the following parameterized CTM implementation of BDM\,2.0:
\[
\begin{aligned}
\mathrm{BDM2}_{\tau_{\mathrm{rep}},\tau_{\mathrm{reuse}}}^{\mathrm{CTM}}(X)
={}&
\min_{P\in\mathcal E_X^{\mathrm{CTM}}}
\min_{\varnothing\neq S\subseteq P}
\Bigg[
\sum_{p_i\in S}
\left[
\mathrm{CTM}_{P}(p_i)
+
\tau_{\mathrm{rep}}(p_i)
\right]
\\
&\quad+
\sum_{p_j\in P\setminus S}
\Bigg[
\min_{p_i\in S}
\min
\left\{
\begin{aligned}
&\mathrm{CTM}_{P\mid P}(p_j\mid p_i),\\
&\mathrm{CTM}_{X\mid X}(x_j\mid x_i)
\end{aligned}
\right\}
+
\tau_{\mathrm{reuse}}
\Bigg]
\Bigg]
\\
&\quad+
\sum_{i=1}^{n}\log m_i.
\end{aligned}
\]
Here, \(\tau_{\mathrm{rep}}(p_i)\) penalizes the residual program--output discrepancy of each independently described representative, while \(\tau_{\mathrm{reuse}}\) penalizes each conditional description. Setting both parameters to zero recovers the direct CTM instantiation of BDM\,2.0.

\subsection{Mutual-Information Consistency Checks}

We assume that CTM provides valid finite approximations to Kolmogorov complexity over the chosen output and program domains. Under this premise, the following checks determine whether conditional and joint CTM constructions recover compatible estimates of algorithmic mutual information. The joint construction provides an independent reference derived from marginal and joint CTM values rather than a ground-truth quantity. All quantities below are understood on the calibrated CTM scale introduced above.

Theorem~\ref{thm:mutual-information-reuse-gain} relates observation-space recovery to the cross-representation quantity \(I(p_j:x_i)\). As established in Section~\ref{sub:estimator-consistency-reuse-gain}, \(I(x_j:x_i)\) provides a conservative sufficient signal for this shared structure. We use \(I(x_j:x_i)\) because it is defined entirely within the output domain and directly corresponds to the conditional term \(\mathrm{CTM}_{X\mid X}(x_j\mid x_i)\) used by the BDM\,2.0 objective. It can therefore be compared with an output-level joint CTM estimate without requiring a joint estimator over program--output pairs.

Let \(Z\in\{X,P\}\) denote either the output or program representation, with \(z_i=x_i\) when \(Z=X\) and \(z_i=p_i\) when \(Z=P\). Define the conditional mutual-information estimate
\[
\hat I_{ij}^{Z,\mathrm{cond}}
=
\mathrm{CTM}_{Z}(z_j)
-
\mathrm{CTM}_{Z\mid Z}(z_j\mid z_i),
\]
and, when a joint CTM value is available, define
\[
\hat I_{ij}^{Z,\mathrm{joint}}
=
\mathrm{CTM}_{Z}(z_i)
+
\mathrm{CTM}_{Z}(z_j)
-
\mathrm{CTM}_{Z,Z}(z_i,z_j).
\]

The joint complexity may be evaluated by encoding the ordered pair as a single object:
\[
\mathrm{CTM}_{Z,Z}(z_i,z_j)
=
\mathrm{CTM}_{Z}\bigl(\langle z_i,z_j\rangle\bigr),
\]
where \(\langle z_i,z_j\rangle\) is a fixed uniquely decodable pairing of the corresponding representations. When the same enumeration and normalization are used, the marginal and joint quantities lie on the same empirical CTM scale. For equal-sized output blocks, the pairing may be implemented by direct concatenation because the boundary is known.

Conditional estimates are directional, whereas joint mutual information is symmetric. Consistency may therefore be assessed separately for \(i\to j\) and \(j\to i\), or after symmetrising the conditional estimates. For each \(Z\in\{X,P\}\), we require
\[
\hat I_{ij}^{Z,\mathrm{cond}}
\approx
\hat I_{ij}^{Z,\mathrm{joint}}
\]
over the reuse relations under consideration, up to finite-enumeration, calibration, and conditioning tolerances.

These checks determine whether the conditional and joint constructions preserve the same algorithmic mutual-information signal. Agreement supports the interpretation of conditional reductions as reusable algorithmic information, while systematic disagreement indicates inconsistency among the corresponding marginal, conditional, or joint CTM estimates.

\subsection{Computational Cost of Reuse Optimization}
\label{sub:computational-cost}

Once the tables for the complexity estimator \(\hat K\), such as calibrated CTM tables, have been computed, BDM\,1.0 is linear in the number of distinct blocks. If \(X\) contains \(n\) distinct blocks, BDM\,1.0 requires one lookup per distinct block and therefore has cost \(O(n)\) after preprocessing.

BDM\,2.0 replaces this independent lookup by an optimization over the admissible explanation family \(\mathcal E_X\). For each explanatory set \(P\in\mathcal E_X\), the method selects a representative subset
\[
S\subseteq P
\]
whose programs are paid for directly, while programs in \(P\setminus S\) are described conditionally from representatives in \(S\). Given precomputed reuse costs \(r_{ij}\), a fixed \(S\) can be evaluated by assigning each non-selected program to its cheapest selected source. Exact search over \(S\), however, may require considering \(2^{|P|}-1\) nonempty subsets. In the full problem, this cost is compounded by the optimization over \(\mathcal E_X\).

For general precomputed unconditional and reuse costs, exact representative selection is NP-hard. It suffices to consider the restricted case in which \(\mathcal E_X\) contains a single explanatory set \(P\). The reduction is from weighted dominating set~\cite{garey_computers_1979}. Given a graph \(G=(V,E)\) with non-negative vertex weights \(w_i\), the goal is to find a minimum-weight subset \(S\subseteq V\) such that every vertex either belongs to \(S\) or is adjacent to a vertex in \(S\).

Associate each vertex \(i\in V\) with a program \(p_i\in P\), and set
\[
\hat K(p_i)=w_i.
\]
For distinct programs, define
\[
r_{ij}
=
\begin{cases}
0, & \text{if } \{i,j\}\in E,\\
M, & \text{otherwise},
\end{cases}
\qquad
M>\sum_i w_i.
\]
Because a non-adjacent assignment costs more than selecting every program directly, no optimal solution uses such an assignment. Every program must therefore be selected or assigned to a selected adjacent program, so \(S\) forms a dominating set of \(G\). Since all permitted reuse assignments have zero cost, the BDM\,2.0 objective reduces to
\[
w(S)=\sum_{i\in S}w_i,
\]
which is exactly the weighted dominating-set objective. Exact reuse optimization is therefore NP-hard. This reduction establishes hardness for general precomputed cost matrices, but not necessarily for the narrower class of matrices induced by a particular CTM representation.

For a fixed CTM representation, the optimization is finite but remains combinatorial. For instance, in the two-dimensional CTM setting~\cite{zenil_two-dimensional_2015}, binary \(4\times4\) blocks admit at most \(2^{16}\) configurations. If the search is restricted to the canonical shortest or near-shortest generators \(P=P^*\), which are admissible by Theorem~\ref{thm:bdm2-independent-limit}, then \(|P^*|\leq2^{16}\), and \(|P^*|=n\) for an object with \(n\) distinct blocks. This reduces the candidate vocabulary relative to the full family \(\mathcal E_X\), but does not remove the exponential dependence on the number of active explanatory programs.

In practice, one may restrict \(\mathcal E_X\) to smaller admissible families, such as \(P=P^*\), or use computational strategies including pruning, integer linear programming, branch-and-bound, local search, and graph-based approximations.

One practical heuristic is to select candidate reuse edges from the original weighted costs before optimizing the representative set. For a fixed explanatory set \(P\), define the thresholded reuse graph by
\[
\mathcal R_{\mathrm{thr}}
=
\left\{
(i,j):
\hat K(p_j)-r_{ij}\geq\tau_{\mathrm{gain}}
\right\},
\]
where \(\tau_{\mathrm{gain}}\geq0\) is the minimum estimated gain required to retain a reuse relation. Thus, an edge from \(p_i\) to \(p_j\) is retained only when describing \(p_j\) from \(p_i\) saves at least \(\tau_{\mathrm{gain}}\) relative to describing \(p_j\) independently. Constructing \(\mathcal R_{\mathrm{thr}}\) requires \(O(|P|^2)\) pairwise checks.

Representative selection on the retained graph uses the same binary-edge formulation as in the NP-hardness reduction above, assigning zero cost to retained relations and excluding all others. This yields a weighted set-cover problem with representative weight
\[
w_i=\hat K(p_i).
\]

The standard greedy algorithm selects the representative with the lowest weight per newly covered program. A direct implementation runs in
\[
O\bigl(|P|^2+|\mathcal R_{\mathrm{thr}}|\bigr)
=
O(|P|^2),
\]
and satisfies~\cite{chvatal_greedy_1979}
\[
w(S_{\mathrm{greedy}})
\leq
\bigl(1+\ln(d_{\mathrm{thr}}+1)\bigr)
w(S_{\mathrm{opt}}),
\]
where
\[
d_{\mathrm{thr}}
=
\max_i
\left|
\left\{
j:(i,j)\in\mathcal R_{\mathrm{thr}}
\right\}
\right|
\]
is the maximum out-degree of the thresholded reuse graph. Hence, \(d_{\mathrm{thr}}+1\) is the largest number of programs covered by any representative, including itself. Here, \(S_{\mathrm{opt}}\) is optimal for the thresholded problem. This guarantee does not extend to the full BDM\,2.0 objective with general nonzero reuse costs \(r_{ij}\).

\section{Limitations}
\label{sec:limitations}

As discussed in Section~\ref{sec:bdm1-independent-limit}, the theoretical comparison between BDM\,2.0 and BDM\,1.0 is overhead-sensitive. Reuse improves the independent-description limit only when the savings from conditional descriptions exceed the cost of specifying the selected explanations and reuse relations. These costs are constant once the representation is fixed, but they can be non-negligible at the finite scales at which CTM-based estimators are applied.

A related limitation is estimator validity. BDM\,2.0 depends on unconditional and conditional estimates that remain calibrated to the same algorithmic scale. If \(\hat K(\cdot\mid\cdot)\) is too large, genuine dependencies may be missed and the method reduces toward BDM\,1.0. If it is too small, spurious reuse may be introduced and complexity may be underestimated. Thus, the method requires conditional estimators whose values are neither arbitrary nor merely similarity-based, but aligned with the intended description-length interpretation.

BDM\,2.0 also inherits the dependence of BDM\,1.0 on the chosen decomposition. For a fixed non-overlapping partition, objects with the same block multiset can receive the same value even when different block arrangements have different global algorithmic structure. This is an equivalence induced by the estimator, not an isomorphism of the objects themselves. Overlapping decompositions and boundary corrections reduce this dependence. In the original BDM analysis, the corresponding errors are shown to be bounded, convergent, and asymptotically negligible under the tested boundary strategies \cite{zenil_decomposition_2018}.

For large objects evaluated with a fixed block size and a fixed CTM table, BDM\,2.0 retains the worst-case entropy limit of BDM\,1.0. When no reusable algorithmic information is detected, BDM\,2.0 reduces to the independent-description regime. As the local CTM contribution remains bounded and multiplicity terms dominate, the estimate approaches a block-entropy-like description, as in the original BDM convergence result \cite{zenil_decomposition_2018}.

The current BDM\,2.0 objective is restricted to a single-level explanatory hierarchy. Each non-selected explanation is described directly from a representative program in \(S\), yielding depth-one reuse relations of the form
\[
p_i \longrightarrow p_j,
\qquad p_i\in S.
\]
It does not permit chains
\[
p_1 \xrightarrow{c_{12}} p_2
\xrightarrow{c_{23}} \cdots
\xrightarrow{c_{n-1,n}} p_n,
\]
where a reconstructed program becomes the source of further reuse. Such hierarchies would more closely reflect the chain rule for joint Kolmogorov complexity~\cite{li_introduction_2019},
\[
K(p_1,\ldots,p_n)
=
K(p_1)
+
\sum_{j=2}^{n}
K\!\left(p_j \mid p_1^*,\ldots,p_{j-1}^*\right)
+
O(1),
\]
for fixed \(n\), where \((p_1,\ldots,p_{j-1})^*\) denotes a shortest description of the preceding tuple. In the current formulation, \(p_j\) can serve as a source only if \(p_j\in S\), in which case its unconditional description cost is paid. 

The same restriction applies between programs and outputs. The formulation considers only direct generators \(U(p_i)=x_i\), rather than multi-level descriptions such as
\[
U(h_i^{(\ell)})=h_i^{(\ell-1)}
\quad \text{for } \ell=2,\ldots,n,
\qquad
U(h_i^{(1)})=p_i,
\qquad
U(p_i)=x_i.
\]
For true Kolmogorov complexity, intermediate programs are already implicit in the minimization over all generators of \(x_i\). For a computable estimator, however, an explicit multi-level search may reveal generative structure and reuse that remain inaccessible to a single-level formulation. Extending BDM\,2.0 to chained reuse and multi-level generative descriptions is left for future work.

Finally, as discussed in Section~\ref{sub:computational-cost}, BDM\,2.0 adds an optimization problem over selected explanations and reuse relations. Exact optimization is NP-hard, so practical implementations must rely on restricted candidate sets, pruning, or approximation.

\section{Conclusion}

We introduced BDM\,2.0, a reuse-based extension of the Block Decomposition Method (BDM) in which distinct blocks are no longer forced to pay independent description costs. When no reliable reuse is available, BDM\,2.0 reduces to the BDM\,1.0 regime, up to representation overhead. When blocks or their generators share algorithmic information, BDM\,2.0 can replace independent descriptions with conditional ones and reduce the total code length. The improvement is therefore tied to estimated algorithmic mutual information, not to statistical similarity alone. Algorithmic attention is the induced selection principle, assigning descriptive resources to the components from which the largest compression gain can be obtained.

The main contribution is theoretical. We established that BDM\,1.0 is an admissible limit of BDM\,2.0, derived the overhead condition under which reuse improves the estimate, and related reuse gains to algorithmic mutual information at the level of observations and programs. Realizing this framework in practice requires CTM-derived unconditional and conditional tables for outputs and programs, calibrated on a common scale and tested against systems with known shared generative structure. Such validation will determine when conditional reductions are meaningful, when the estimator should reduce to BDM\,1.0, and when reuse yields a closer approximation to Kolmogorov complexity.

BDM made algorithmic complexity practical beyond the scale of CTM by combining local algorithmic estimates with statistical aggregation. BDM\,2.0 removes this remaining asymmetry. It keeps the local CTM-based estimates, but treats dependencies among components algorithmically as well. This brings decomposition-based complexity estimation closer to the object of AIT itself, where structure is not only what can be described shortly, but also what can be reused across descriptions.

\bibliographystyle{plain}
\bibliography{references}

\newpage

\appendix

\section*{Supplementary Information}

\section{CTM State-Space Counts}
\label{app:ctm_counts}

The Coding Theorem Method (CTM) relies on exhaustive enumeration of small Turing machines. For machines with \(n\) states and \(k\) symbols, each state-symbol pair requires either:
\begin{itemize}
\item a non-halting transition consisting of one new state (chosen from \(n\) options), one new symbol (chosen from \(k\) options), and one head move (chosen from 2 options), or
\item a halting transition consisting of one new symbol (chosen from \(k\) options).
\end{itemize}

Thus, the number of possible instructions is \(2kn+k=k(2n+1)\). Since there are \(n\cdot k\) state-symbol pairs, the total number of machines is
\[
N(n,k)
=
\bigl(k(2n+1)\bigr)^{nk}.
\]
For the common case of \(k=2\),
\[
N(n,2)
=
(4n+2)^{2n}.
\]

For \(n=5\), this yields approximately \(2.7\times 10^{13}\) machines, which can be exhaustively simulated. For \(n=6\), the number grows to approximately \(9.5\times 10^{16}\), requiring partial enumeration. Thus, CTM is feasible only for small \(n\), motivating the need for BDM.

\section[Approximation-Error Bound for BDM 2.0]{Approximation-Error Bound for BDM\,2.0}
\label{app:bdm2-approximation-error}

We show that the reuse-improvement bound implies an approximation-improvement statement with respect to \(K(X)\), provided the reconstruction costs of both estimators are included.

For a fixed decomposition, the original BDM analysis gives
\[
K(X)
\leq
\mathrm{BDM1}(X)+\Gamma_1(X),
\]
where
\[
\Gamma_1(X)
=
O\bigl(\log_2|\mathcal A|\bigr)+e_1(X),
\]
\(\mathcal A\) is the set of admissible arrangements of the decomposed components, and \(e_1(X)\) is the accumulated CTM approximation error~\cite{zenil_decomposition_2018}.

\begin{lemma}[BDM\,2.0 upper-description property]
\label{lem:bdm2-upper-description}
Assume that the unconditional and conditional estimates selected by BDM\,2.0 correspond to realizable prefix-free descriptions whose total length exceeds the corresponding estimated costs by at most \(e_2(X)\). Then
\[
K(X)
\leq
\mathrm{BDM2}(X)+\Gamma_2(X),
\]
where
\[
\Gamma_2(X)
=
O\bigl(\log_2|\mathcal A|\bigr)
+
\omega_2(X)
+
e_2(X)
+
O(1),
\]
and \(\omega_2(X)\) is the cost of specifying the representative set, the reuse assignments, the selected program- or observation-space descriptions, and the required delimiters.
\end{lemma}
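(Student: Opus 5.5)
The proof is a direct construction: from an optimal BDM\,2.0 solution we build an explicit prefix-free program for \(X\) and bound its length. Fix a minimizing pair \((P,S)\) in Eq.~\eqref{eq:bdm2}, and for each \(p_j\in P\setminus S\) fix the minimizing source \(p_{\sigma(j)}\in S\) and the minimizing branch (program-space or observation-space). By hypothesis, each selected estimate is realized by a prefix-free description. That is, there are strings \(d_i\) with \(U(d_i)=p_i\) for \(p_i\in S\), and strings \(d_j\) with either \(U(d_j,p_{\sigma(j)})=p_j\) or \(U(d_j,x_{\sigma(j)})=x_j\) for the remaining indices. Their total length is at most \(\mathcal L_{\mathrm{expl}}(X)+e_2(X)\).

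The program is a concatenation in a fixed order, with the following parts.
\begin{enumerate}
\item A self-delimiting header specifying \(n\), the subset \(S\), the map \(\sigma\), and one branch bit per target. This header, with any required delimiters, has length \(\omega_2(X)\) by definition.
\item The descriptions \(d_i\), \(p_i\in S\), followed by the \(d_j\), \(p_j\notin S\). Because each \(d\) is prefix-free, no further separators are needed beyond those already counted in \(\omega_2\).
\item The multiplicity data \(\log m_i\) together with the arrangement index, which identifies which element of \(\mathcal A\) places the \(x_i\) with multiplicities \(m_i\) into \(X\). This part is reused verbatim from the BDM\,1.0 construction and costs \(\sum_i\log m_i+O(\log_2|\mathcal A|)\).
\end{enumerate}
A fixed decoder, contributing the \(O(1)\) term, parses the header. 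It runs \(U\) on each \(d_i\) to obtain \(p_i\), and executes \(p_i\) to obtain \(x_i\). For each target it decodes either \(p_j\), which it then runs to get \(x_j\), or \(x_j\) directly, using the already reconstructed source. Finally it reassembles \(X\) from the blocks and the arrangement data.

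Summing the lengths gives \(K(X)\le \mathcal L_{\mathrm{expl}}(X)+e_2(X)+\omega_2(X)+\sum_i\log m_i+O(\log_2|\mathcal A|)+O(1)\), which is \(\mathrm{BDM2}(X)+\Gamma_2(X)\).

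The main obstacle is making sure the decoding order is well founded and that nothing is double counted. Well-foundedness holds because the objective in Eq.~\eqref{eq:bdm2} has depth one: every source lies in \(S\) and is therefore decoded unconditionally first. This restriction is exactly what is noted in Section~\ref{sec:limitations}, and it means no chains have to be resolved.

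The second delicate point is the observation-space branch, which needs \(x_{\sigma(j)}\) as input. Here we must use that executing \(p_{\sigma(j)}\) halts with output \(x_{\sigma(j)}\); this holds since \(p_{\sigma(j)}\in\mathcal P(x_{\sigma(j)})\).

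Finally, the multiplicity and arrangement encoding must be taken over from the BDM\,1.0 bound \(K(X)\le\mathrm{BDM1}(X)+\Gamma_1(X)\) without change. Doing so requires separating the \(\sum_i\hat K(x_i)\) part of that argument from the \(\log m_i\) and arrangement part. I would state this separation explicitly as the assumption that the \(O(\log_2|\mathcal A|)\) term of \(\Gamma_1\) bounds the cost of placing the given distinct blocks, independently of how those blocks were obtained.
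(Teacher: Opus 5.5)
Your proposal is correct and follows the paper's own argument. You fix a minimizing pair $(P,S)$ and decode the representatives in $S$ first, which is well founded because reuse has depth one. Each target is then recovered through its selected program- or observation-space conditional description, and $X$ is assembled from the multiplicity and arrangement data, with the structural information charged to $\omega_2(X)$, the realization slack to $e_2(X)$, and the arrangement to $O(\log_2|\mathcal A|)$. Your version differs only in making the encoding explicit (header, concatenation order, fixed decoder) and in stating outright the reliance on $U(p_{\sigma(j)})=x_{\sigma(j)}$ and on reusing the BDM\,1.0 arrangement cost; the paper leaves these implicit in its prose.
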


\begin{proof}
Consider an admissible pair \(P,S\) attaining the BDM\,2.0 minimum. The programs in \(S\) are recovered from their unconditional descriptions and executed to obtain their corresponding blocks. For each \(p_j\in P\setminus S\), the selected conditional description either recovers \(p_j\) from some \(p_i\in S\), after which \(p_j\) is executed to obtain \(x_j\), or recovers \(x_j\) directly from \(x_i\).

Because every conditional description uses a source in \(S\), all source programs and their corresponding blocks can be recovered before any target is decoded. The representative set, the source assigned to each target, and the choice between program- and observation-space descriptions are specified within \(\omega_2(X)\). The multiplicities and arrangement information then determine how the recovered blocks are assembled to reconstruct \(X\).

The BDM\,2.0 objective accounts for the unconditional and conditional estimated costs and the multiplicity terms. By assumption, replacing the selected estimates by realizable prefix-free descriptions increases their total length by at most \(e_2(X)\). The remaining structural, arrangement, delimiting, and decoding costs are included in \(\Gamma_2(X)\). Therefore,
\[
\mathrm{BDM2}(X)+\Gamma_2(X)
\]
upper-bounds the length of a valid prefix-free description of \(X\). Since \(K(X)\) is the length of the shortest valid prefix-free description of \(X\),
\[
K(X)
\leq
\mathrm{BDM2}(X)+\Gamma_2(X).
\]
\end{proof}

Define the common reconstruction overhead
\[
\Gamma(X)
=
\max\{\Gamma_1(X),\Gamma_2(X)\}.
\]
Although \(\Gamma_2(X)\) includes the additional term \(\omega_2(X)\), it need not dominate \(\Gamma_1(X)\), since \(e_1(X)\), \(e_2(X)\), and the constants hidden in the asymptotic terms may differ. Taking the maximum therefore guarantees that the same reconstruction overhead is sufficient for both estimators. It follows that
\[
K(X)
\leq
\mathrm{BDM1}(X)+\Gamma(X)
\]
and
\[
K(X)
\leq
\mathrm{BDM2}(X)+\Gamma(X).
\]

\begin{proposition}[Approximation improvement under common reconstruction overhead]
\label{prop:bdm2-closer-to-k}
Assume the hypotheses of Theorem~\ref{thm:bdm2-reuse-improvement} and Lemma~\ref{lem:bdm2-upper-description}, and let \(\Gamma(X)\) be the common reconstruction overhead defined above. If
\[
\hat{\Delta}^{*}
\geq
C_{\mathrm{rep}}(P^*),
\]
then
\[
\left|
\bigl(\mathrm{BDM2}(X)+\Gamma(X)\bigr)-K(X)
\right|
\leq
\left|
\bigl(\mathrm{BDM1}(X)+\Gamma(X)\bigr)-K(X)
\right|.
\]
\end{proposition}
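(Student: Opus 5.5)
The plan is to show that both shifted estimators lie on the same side of \(K(X)\), so that the absolute values can be removed and the comparison reduces to ordering \(\mathrm{BDM2}(X)\) and \(\mathrm{BDM1}(X)\).

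First, by the definition of the common reconstruction overhead \(\Gamma(X)=\max\{\Gamma_1(X),\Gamma_2(X)\}\), we have \(K(X)\le \mathrm{BDM1}(X)+\Gamma_1(X)\le \mathrm{BDM1}(X)+\Gamma(X)\). By Lemma~\ref{lem:bdm2-upper-description}, we likewise have \(K(X)\le \mathrm{BDM2}(X)+\Gamma_2(X)\le\mathrm{BDM2}(X)+\Gamma(X)\). Both quantities \(\bigl(\mathrm{BDMk}(X)+\Gamma(X)\bigr)-K(X)\), for \(k=1,2\), are therefore nonnegative, and each absolute value equals its argument. This is the step that actually uses the hypotheses of Lemma~\ref{lem:bdm2-upper-description}: realizable prefix-free descriptions with excess at most \(e_2(X)\), and specification cost \(\omega_2(X)\) absorbed into \(\Gamma_2\). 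It is also the step I would examine most carefully. The two upper bounds must hold for the same \(X\) and the same fixed decomposition, and taking the maximum must not invalidate either bound. It does not, since enlarging an additive upper-bound term preserves the inequality.

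Second, invoke Theorem~\ref{thm:bdm2-reuse-improvement}, which gives \(\mathrm{BDM2}(X)\le \mathrm{BDM1}(X)+C_{\mathrm{rep}}(P^*)-\hat\Delta^*\). Under the assumption \(\hat\Delta^*\ge C_{\mathrm{rep}}(P^*)\), this yields \(\mathrm{BDM2}(X)\le\mathrm{BDM1}(X)\). Note that only the non-strict version is needed here, which is why the proposition allows equality in its hypothesis.

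Finally, add \(\Gamma(X)-K(X)\) to both sides of this inequality. Since both sides are nonnegative by the first step, this gives
\[
\bigl|\bigl(\mathrm{BDM2}(X)+\Gamma(X)\bigr)-K(X)\bigr| = \mathrm{BDM2}(X)+\Gamma(X)-K(X) \le \mathrm{BDM1}(X)+\Gamma(X)-K(X) = \bigl|\bigl(\mathrm{BDM1}(X)+\Gamma(X)\bigr)-K(X)\bigr|.
\]
The only real obstacle is the sign argument in the first step. Without a common overhead, \(\mathrm{BDM2}+\Gamma_2\) and \(\mathrm{BDM1}+\Gamma_1\) could not be compared directly. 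Moreover, a lower estimate would not in general be closer to \(K(X)\) if it could drop below \(K(X)\). The max construction of \(\Gamma(X)\) rules out both problems.
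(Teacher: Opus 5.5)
Your proposal is correct and is essentially the paper's proof: the reuse-gain bound (Theorem~\ref{thm:bdm2-reuse-improvement}) together with \(\hat{\Delta}^*\geq C_{\mathrm{rep}}(P^*)\) gives \(\mathrm{BDM2}(X)\leq\mathrm{BDM1}(X)\). The choice \(\Gamma(X)=\max\{\Gamma_1(X),\Gamma_2(X)\}\) makes both shifted quantities upper bounds on \(K(X)\), so the absolute values can be dropped. The only difference is ordering: you establish the sign step first, whereas the paper does it after adding \(\Gamma(X)\) to both sides.
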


\begin{proof}
By Theorem~\ref{thm:bdm2-reuse-improvement},
\[
\mathrm{BDM2}(X)
\leq
\mathrm{BDM1}(X)
+
C_{\mathrm{rep}}(P^*)
-
\hat{\Delta}^{*}.
\]
If
\[
\hat{\Delta}^{*}
\geq
C_{\mathrm{rep}}(P^*),
\]
then
\[
\mathrm{BDM2}(X)
\leq
\mathrm{BDM1}(X).
\]
Adding the common reconstruction overhead gives
\[
\mathrm{BDM2}(X)+\Gamma(X)
\leq
\mathrm{BDM1}(X)+\Gamma(X).
\]
By Lemma~\ref{lem:bdm2-upper-description}, the BDM\,1.0 upper-description property, and the definition of \(\Gamma(X)\), both quantities are valid upper descriptions of \(X\). Hence,
\[
\begin{aligned}
\left|
\bigl(\mathrm{BDM2}(X)+\Gamma(X)\bigr)-K(X)
\right|
&=
\mathrm{BDM2}(X)+\Gamma(X)-K(X)
\\
&\leq
\mathrm{BDM1}(X)+\Gamma(X)-K(X)
\\
&=
\left|
\bigl(\mathrm{BDM1}(X)+\Gamma(X)\bigr)-K(X)
\right|.
\end{aligned}
\]
\end{proof}

The proposition is a common-overhead comparison. The term \(\Gamma(X)\) assigns no additional advantage to BDM\,2.0, since the same overhead is added to both estimators. Once both quantities are valid upper descriptions of \(X\), any estimator-level reuse gain satisfying
\[
\hat{\Delta}^{*}
\geq
C_{\mathrm{rep}}(P^*)
\]
makes the BDM\,2.0 description no farther from \(K(X)\) than the independent BDM\,1.0 description.

\section{Affine Calibration of Finite CTM Scales}
\label{app:ctm-affine-calibration}

This appendix explains the affine calibration used in Section~\ref{sec:simplest-program-output-validation}. The calibration is used to place output-level and program-level CTM estimates on a common empirical scale. It does not assume that two finite CTM tables are theoretically equivalent.

A finite CTM estimate is obtained from empirical output frequencies. If an enumeration \(A\) contains \(N_A\) machines and \(n_x^A\) of them produce \(x\), then
\[
\mathrm{CTM}_A(x)
=
-\log_2\frac{n_x^A}{N_A}
=
\log_2N_A-\log_2n_x^A.
\]
For another enumeration \(B\), with \(N_B\) machines and \(n_x^B\) machines producing \(x\), we have
\[
\mathrm{CTM}_B(x)
=
-\log_2\frac{n_x^B}{N_B}
=
\log_2N_B-\log_2n_x^B.
\]
Therefore,
\[
\mathrm{CTM}_B(x)-\mathrm{CTM}_A(x)
=
\log_2\frac{N_B}{N_A}
-
\log_2\frac{n_x^B}{n_x^A}.
\]
This expression separates two effects. The first term depends only on the change in total enumeration size. The second term depends on how many additional descriptions of the specific object \(x\) are found.

If the larger enumeration changes all object counts by approximately the same factor,
\[
n_x^B
\approx
c\,n_x^A
\]
for all relevant \(x\), then
\[
\mathrm{CTM}_B(x)-\mathrm{CTM}_A(x)
\approx
\log_2\frac{N_B}{N_A}
-
\log_2c.
\]
The difference is then approximately independent of \(x\), so the two CTM tables differ mainly by an additive offset. In this ideal case, an offset parameter \(\gamma\) would be sufficient.

Finite CTM tables usually do not behave so uniformly. Increasing the enumeration budget or changing the model class may add many descriptions for some objects and few descriptions for others. The ratio
\[
\frac{n_x^B}{n_x^A}
\]
then depends on \(x\), so the discrepancy between the two CTM estimates becomes object-dependent. A single additive correction cannot remove this mismatch.

For this reason, we use the affine relation
\[
\mathrm{CTM}_X(x_i)
\approx
\alpha\,\mathrm{CTM}_P(p_i^*)+\gamma
\]
as an empirical calibration between the output-level CTM table and the program-level CTM table. The offset \(\gamma\) accounts for additive shifts caused by finite enumeration effects. The slope \(\alpha\) allows for systematic scale differences between the two CTM spaces, for example differences caused by program encoding, output representation, or model class.

The residual discrepancy is
\[
\tau_{\mathrm{rep}}(p_i^*)
=
\left|
\mathrm{CTM}_X(x_i)
-
\bigl(\alpha\,\mathrm{CTM}_P(p_i^*)+\gamma\bigr)
\right|.
\]
The residual \(\tau_{\mathrm{rep}}(p_i^*)\) measures the remaining program--output mismatch after calibration. Small residuals indicate that the chosen program representation is compatible with the output CTM scale. Large or unstable residuals indicate that the program-level CTM table does not provide a small representation overhead relative to the output-level CTM table.

For the simplest available explanatory set \(P^*=\{p_1^*,\ldots,p_n^*\}\), the accumulated representation tolerance is
\[
\mathcal T_{\mathrm{rep}}(P^*)
=
\sum_{i=1}^{n}
\tau_{\mathrm{rep}}(p_i^*).
\]

The calibration is therefore an implementation-level alignment step. It defines a common empirical CTM scale for subsequent BDM\,2.0 computations, while the remaining mismatch is explicitly recorded by \(\mathcal T_{\mathrm{rep}}(P^*)\).

\end{document}